\newtheorem{defn}{Definition}
\newtheorem{thm}[defn]{Theorem}
\newtheorem{cor}[defn]{Corollary}
\newtheorem{lem}[defn]{Lemma}
\theoremstyle{definition}
\def\RR{\mathbbm{R}}
\def\minimize{\textrm{minimize}}
\def\st{\textrm{subject to }}
\newcommand{\op}[1]{{\mathbf{#1}}}
\newcommand{\tr}{\mathrm{Tr}}
\newcommand{\s}{\op{S}}
\newcommand{\beq}{\begin{equation}}
\newcommand{\eeq}{\end{equation}}
\newcommand{\bea}[1]{\begin{equation}\begin{array}{#1}}
\newcommand{\eea}{\end{array}\end{equation}}
\newcommand{\beqn}{\begin{eqnarray}}
\newcommand{\eeqn}{\end{eqnarray}}
\renewcommand{\rho}{\varrho}
\newcommand{\ba}{\begin{eqnarray}}
\newcommand{\be}{\begin{equation}}
\newcommand{\ee}{\end{equation}}
\newcommand{\ea}{\end{eqnarray}}
\newcommand{\ban}{\begin{eqnarray*}}
\newcommand{\ean}{\end{eqnarray*}}
\pgfmathsetmacro{\rad}{.7}
\newcommand{\prep}[3]{
\draw [thick, black] (#1-\rad,#2) arc [radius=\rad, start angle=180, end angle= 360];
 \draw [thick] (#1-\rad,#2) -- (#1+\rad,#2);
 \node at (#1,#2-.4*\rad) {#3};}
 \newcommand{\meas}[3]{
\draw [thick, black] (#1-\rad,#2) arc [radius=\rad, start angle=180, end angle= 0];
 \draw [thick] (#1-\rad,#2) -- (#1+\rad,#2);
 \node at (#1,#2+.4*\rad) {#3};}
 \newcommand{\earth}[2]{
 \draw [thick] (#1-\rad *.65,#2+\rad *1.2-\rad *1.2) -- (#1+\rad *.65,#2+\rad *1.2-\rad *1.2);
 \draw [thick] (#1-\rad *.45,#2+\rad *1.35-\rad *1.2) -- (#1+\rad *.45,#2+\rad *1.35-\rad *1.2);
 \draw [thick] (#1-\rad *.25,#2+\rad *1.5-\rad *1.2) -- (#1+\rad *.25,#2+\rad *1.5-\rad *1.2);
 \draw [thick] (#1-\rad *.05,#2+\rad *1.65-\rad *1.2) -- (#1+\rad *.05,#2+\rad *1.65-\rad *1.2);
 }
\begin{document}

\title{Information-Theoretic Implications of Quantum Causal Structures}
\date{\today}

\author{Rafael Chaves}
\affiliation{Institute for Physics, University of Freiburg, Rheinstrasse 10, D-79104 Freiburg, Germany}
\author{Christian Majenz}
\affiliation{Department of Mathematical Sciences, University of Copenhagen, Universitetsparken 5, DK-2100 Copenhagen Ø}
\author{David Gross}
\affiliation{Institute for Physics, University of Freiburg, Rheinstrasse 10, D-79104 Freiburg, Germany}

\begin{abstract}
The correlations that can be observed between a set of
variables depend on the causal structure underpinning them.
Causal structures can be modeled using directed acyclic graphs, where
nodes represent variables and edges denote functional dependencies.
In this work, we describe a general algorithm for computing
information-theoretic constraints on the correlations that can arise
from a given interaction pattern, where we allow for classical as well
as quantum variables.
We apply the general technique to two relevant cases:
First, we show that the principle of \emph{information causality}
appears naturally in our framework and go on to generalize and
strengthen it.
Second, we derive bounds on the correlations that can occur in a
networked architecture, where a set of few-body quantum systems is
distributed among a larger number of parties.
\end{abstract}

\maketitle

\section{Introduction}

A \emph{causal structure} for a set of classical variables is a graph,
where every variable is associated with a node and a directed edge
denotes functional dependence.
Such a causal model offers a means of \emph{explaining} dependencies
between variables, by specifying the process that gave rise to them.
More formally, variables $X_1, \dots, X_n$ form a
\emph{Bayesian network} with respect to a directed,
acyclic graph,
if every variable $X_i$
depends only on its graph-theoretic parents
$\operatorname{pa_i}$. This is the case
\cite{Pearlbook,Spirtesbook}
if and only if
the distribution factorizes as in
\begin{equation}\label{eqn:factorized}
	p(x_1, \dots, x_n) = \prod_{i=1}^n p (x_i | x_{\mathrm{PA}_{i}} ).
\end{equation}
One can ask the following fundamental question: \emph{Given a subset of
of variables, which correlations between them are compatible with a
given causal structure?}
In this work, we measure ``correlations'' in terms of the collection
of joint entropies of the the variables (a precise definition will be
given below).

This problem appears in several contexts. In the young field of \emph{causal
inference}, the goal is to learn causal dependencies from empirical
data
\cite{Pearlbook,Spirtesbook}.
If observed correlations are incompatible with a presumed causal
structure, it can be discarded as a possible model.
This is close to the reasoning employed in \emph{Bell's Theorem} \cite{Bell1964}  -- a
connection which is increasingly appreciated among quantum physicists \cite{Spekkens2012,Fritz2012,FritzChaves2013,Chaves2014,Fritz2014,Henson2014,Pienaar2014}.
In the context of \emph{communication theory},
these joint entropies describe the capacities that can be achieved in
network coding protocols \cite{Yeung2008}.

In this work, we are interested in quantum generalizations of causal
structures. Nodes are now allowed to represent either quantum or
classical systems, and edges are quantum operations.
An important conceptual difference to the purely classical setup
is rooted in the fact that quantum operations disturb their input. Put
differently, quantum mechanics does not assign a joint state to the
input and the output of an operation. Therefore, there is in general
no analogue to (\ref{eqn:factorized}), i.e., a global density
operator for all nodes in a quantum causal structure cannot be
defined. However, if we pick a set of nodes that do coexist (e.g.\
because they are classical, or because they are created at the same
instance of time), then we can again ask: \emph{Which joint entropies
of coexisting nodes can result from a given quantum causal structure?}
The main contribution of this work is to describe a systematic
algorithm for answering this question, generalizing previous results on
the classical case \cite{Chaves2012,FritzChaves2013,Chaves2013entropic,Chaves2014,Chaves2014b}.
We illustrate the versatility and practical relevance with two
examples. The details, along with more examples including, e.g.\ dense
coding schemes \cite{Bennett1992}, are presented in the main text.

\emph{Distributed architectures}.---Consider a scenario where, in a
first step,
several few-body quantum states are distributed among a number of
parties.
In a second step, each party processes those parts of the states it
has access to (e.g.\ by performing a coherent operation or a joint measurement).
Such setups are studied e.g.\ in distributed quantum computing
\cite{van2014quantum,buhrman2003distributed},
quantum networks
\cite{Acin2007entanglement}, quantum non-locality \cite{Brunner2014review}, and quantum
repeaters \cite{Sangouard2011}.
Which limits on the resulting correlations are implied by the network
topology alone? Our framework can be used to compute these
systematically. We will, e.g., prove certain \emph{monogamy relations}
between the correlations that can result from measurements
on distributed quantum states.

\emph{Information causality}.---The ``no-signalling principle'' alone
is insufficient to explain the ``degree of non-locality'' exhibited by
quantum mechanics \cite{Popescu1994}. This has motivated the search for stronger,
operationally motivated principles, that may single out quantum
mechanical correlations
\cite{Van1999nonlocality,Brassard2006limit,Pawlowski2009,
gross2010all,de2012deriving,Navascues2010glance,Fritz2013local,Sainz2014,Navascues2014almost}.
One of these is \emph{information causality} (IC)
\cite{Pawlowski2009,Barnum2010entropy}
which posits that an $m$ bit message from Alice to Bob must not allow
Bob to learn more than $m$ bits about a string held by Alice. A
precise formulation of the protocol involves a relatively complicated
quantum causal structure (Fig.~\ref{fig:triangle_and_IC}b). It implies
an information-theoretic bound on the mutual information between bits
$X_i$ held by Alice and guesses $Y_i$ of these by Bob
\cite{Pawlowski2009}. Here, we note that the IC setup falls into our
framework and we put the machinery to use to generalize and strengthen
it. We will show below that by taking additional information into
account, our strengthened IC principle can identify super-quantum
correlations that could not have been detected in the original
formulation.

\begin{figure} [!t]
\centering
\includegraphics[width=0.49\textwidth]{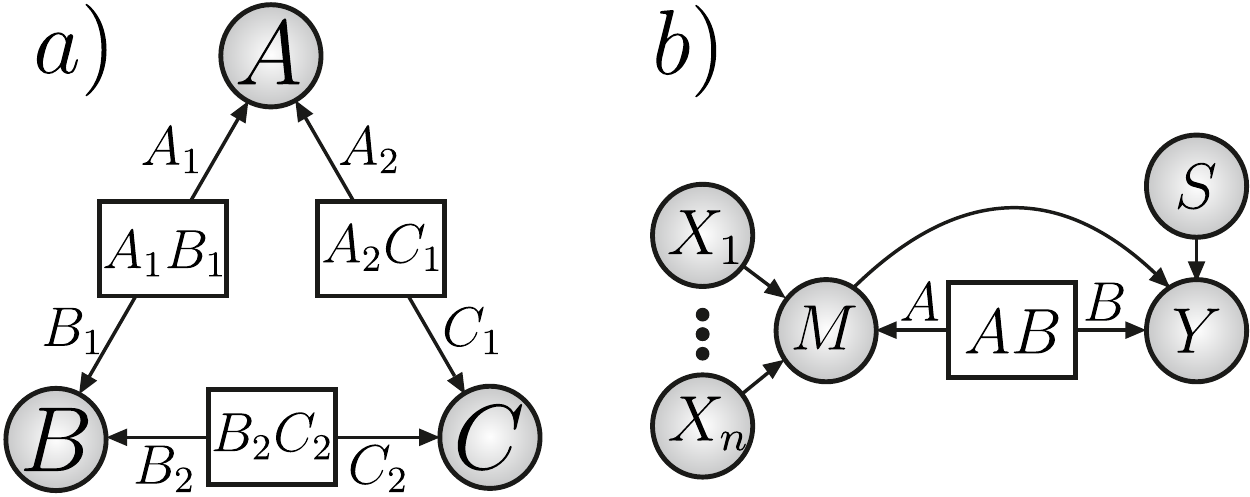}
\caption{
\textbf{(a)} An example of distributed architecture involving bipartite entangled states. Each of the underlying quantum states can connect at most two of the observable variables, what implies a non-trivial monogamy of correlations as captured in \eqref{ineq_mn}. \textbf{b} The quantum causal structure associated with the information causality principle.}
\label{fig:triangle_and_IC}
\end{figure}

\section{Quantum Causal Structures}

Informally, a quantum causal structure
specifies the functional dependency between a collection of quantum
and classical variables.
We find it helpful to employ a graphical notation, where we aim to
closely follow the conventions of classical graphical models \cite{Pearlbook,Spirtesbook}.
There are two basic building blocks: Root nodes are labeled by
a set of quantum systems and represent a density operator for these
systems
\begin{equation*}
	\raisebox{-.10cm}{\includegraphics[width=.6cm]{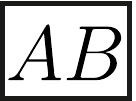}}
	\>
	\triangleq \rho_{AB}.
\end{equation*}
The second type is given by
nodes with incoming edges. Again, both the edges and
the node carry the labels of quantum systems. Such symbols represent a
quantum operation (completely positive, trace-preserving map) from the
systems associated with the edges to the ones
associated with the node:
\begin{equation*}
	\raisebox{-.23cm}{\includegraphics[width=1.0cm]{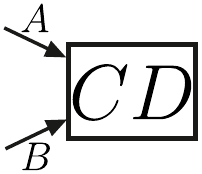}}
	\>
	\triangleq
	\Phi_{AB \to CD}: A\otimes B \to C \otimes D.
\end{equation*}
These blocks may be combined: a node containing a system $X$
can be connected to an edge with the same label. The interpretation
is, of course, that $X$ serves as the input to the associated
operation. For example,
\begin{equation*}
	\raisebox{-0.60cm}{\includegraphics[width=2.0cm]{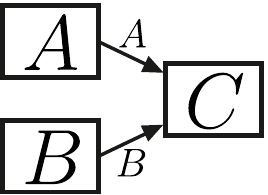}}
	\>
	\triangleq
	\rho_C = \Phi_{AB \to C} (\rho_A \otimes \rho_B)
\end{equation*}
says that the state of system $C$ is the result of applying an
operation $\Phi_{AB \to C}$ to a product state on $AB$.
To avoid ambiguities, we will never use the same label in two
different nodes (in particular, we always assume that the output
systems of an operation are distinct from the input systems).
For a more involved example,
note that Fig.~\ref{fig:triangle_and_IC}(a)
gives a fairly readable representation of the following cumbersome
algebraic statement:
\begin{eqnarray}
	\label{eqn:cumbersome}
	\rho_{ABC} &=&
	\big[
		\Phi_{A_1 A_2 \to A}
		\otimes
		\Phi_{B_1 B_2 \to B}
		\otimes
		\Phi_{C_1 C_2 \to C}
	\big] \\
	&&
	(
		\rho_{A_1 B_1} \otimes
		\rho_{A_2 C_2} \otimes
		\rho_{B_2 C_2}
	) \nonumber
\end{eqnarray}
(where the operation defined in the first line is acting on the state
defined in the second line).
The graphical representation does not indicate \emph{which} input
state or \emph{which} operation to employ. We suppress this
information, because we will be interested only in constraints on the
resulting correlations that are implied by the topology of the
interactions alone, regardless of the choice of states and maps.

We will use round edges to denote classical variables (equivalently,
quantum systems described by states which are diagonal in a given
basis).
In principle, classical variables could have more than one outgoing
edge, though this does not happen in the examples considered here. Of
course, the no-cloning principle precludes a quantum system being used
as the input to two different operations.
Only graphs that are free of cyclic dependencies can be interpreted as
specifying a causal structure. Thus, as is the case in classical
Bayesian networks, every quantum causal structure is associated with a
directed, acyclic graph (commonly abbreviated \emph{DAG}).

We note that graphical notations for quantum processes have been used
frequently before. The most popular graphical calculus is probably the
gate model of quantum computation \cite{Nielsen2010quantum}, where,
directly opposite to our conventions, operations are nodes and systems
are edges. Quantum communication scenarios are often visualized the
same way we employ here \cite{wilde2013quantum}. The recently introduced
\emph{generalized Bayesian networks} of \cite{Henson2014} are closely related
to our system. There, the authors even allow for post-quantum
resources.

We have noted in the introduction that a classical Bayesian network
not only defines the functional dependencies between random variables,
but also provides a structural formula (\ref{eqn:factorized}) for
the joint distribution of all variables in the graph. Again, such a
joint state for all systems that appear in a quantum causal structure
is not in general defined. However, other authors have considered
quantum versions of distributions that factor as in
(\ref{eqn:factorized}) and have developed graphical notations to this
end. Well-known examples include the related constructions that go by
the name of finitely correlated states,
matrix-product states, tree-tensor networks, or projected entangled pairs
states (a highly incomplete set of starting points to the literature
is given by
\cite{fannes1992finitely,perez2007matrix,shi2006classical}).
Also, certain definitions of quantum Bayesian
networks \cite{tucci1995quantum} fall into that class.

\section{Entropic description of quantum causal structures}
The entropic description of classical-quantum DAGs can be seen as a generalization of the framework for case of purely classical variables \cite{Chaves2012,FritzChaves2013,Chaves2013entropic,Chaves2014,Chaves2014b} that consists of three main steps. In the first, we describe the constraints (given in terms of linear inequalities) over the entropies of the $n$ variables describing a DAG. In the second step one needs to add to this basic set of inequalities, the causal entropic constraints as encoded in the conditional independencies implied by the DAG. In the last step, we need to eliminate from our description all terms involving variables that are not observable. The final result of this three steps program is the description of the marginal entropic constraints implied by the model under test.

We denote the set of indices of the random variables by $[n]=\{1, \dots, n\}$ and its power set (i.e., the set of subsets) by $2^{[n]}$. For every subset $S\in 2^{[n]}$ of indices, let $X_S$ be
the random vector $(X_i)_{i\in S}$ and denote by $H(S):=H(X_S)$ the associated entropy vector (for some, still unspecified entropy function $H$). Entropy is then a function $H: 2^{[n]} \to \mathbbm{R}, \qquad S \mapsto H(S)$ on the power set.

Note that as entropies must fulfill some constraints, not all entropy vectors are possible. That is, given the linear space of all set functions denoted by $R_n$ and a function $h\in R_n$ the region of vectors in $R_n$ that correspond to entropies is given by
\begin{eqnarray*}
	\left\{ h \in R_n \,|\, h(S) = H(S) \text{ for some entropy function
	} H \right\}.
\end{eqnarray*}
Clearly, this region will depend on the chosen entropy function.

For purely classical variables, $H$ is chosen to be the Shannon entropy given by $H(X_S)=-\sum_{x_s}p(x_s)\log_2 p(x_s)$. In this case an outer approximation to the associated entropy region has been studied extensively in information theory, the so called \emph{Shannon cone} $\Gamma_n$ \cite{Yeung2008}, which is the basis of the entropic approach in classical causal inference \cite{Chaves2014b}. The Shannon cone is the polyhedral closed convex cone of set functions $h$ that respect two elementary inequalities, known as polymatroidal axioms:  The first relation is the \emph{sub-modularity} (also known as strong subadditivity) condition which is equivalent to the positivity of the conditional mutual information, e.g.  $I(A:B\vert C)= H(A,C)+H(B,C)-H(A,B,C)-H(C) \geq 0$. The second inequality  -- known as \emph{monotonicity}  -- is equivalent to the positivity of the conditional entropy, e.g. $H(A \vert B) = H(A,B)- H(B) \geq 0 $.

Here lies the first difference between the classical and quantum variables, the latter being described in terms of the quantum analog of the Shannon entropy, the von Neumann entropy $H(\rho_{A,B})=-\tr \left(\rho_{A,B} \log \rho_{A,B} \right) $. While quantum variables respect sub-modularity, the von Neumann entropy fails to commit with monotonicity. Note, however, that for sets consisting of both classical and quantum variables, monotonicity may still hold. That is because the uncertainty about a classical variable $A$ cannot be negative, even if we condition on an arbitrary quantum variable $\rho$, following then that $H(A \vert \rho) \geq 0$ \cite{Barnum2010entropy}. Furthermore, for a classical variable $A$, the entropy $H(A)$ reduce to the Shannon entropy \cite{Safi2011}.

Another important difference in the quantum case is the fact that
measurements (or more generally complete positive and trace preserving
(CPTP) maps) on a quantum state will generally destroy/disturb the
state. To illustrate that consider the classical-quantum DAG in Fig.
\ref{fig:triangle_and_IC}. Consider the classical and observable
variable $A$. It can without loss of generality be considered a
deterministic function of its parents $\rho_{A_1}$ and $\rho_{A_2}$,
as any additional local parent can be absorbed in the latter. For the
variable $A$ to assume a definite outcome, a joint CPTP map is applied
to both parents $\rho_{A_1}$ and $\rho_{A_2}$ that will in general
disturb these variables. The variable $A$ does not coexist with
variables $A_1$ and $A_2$. Therefore, no entropy can be
associated to these variables simultaneously, that is,
$H(A,A_1,A_2)$ cannot be part of the entropic
description of the classical-quantum DAG. Classically, this problem
does not arise as the underlying classical hidden variables could be
accessed without disturbing them.

The elementary inequalities discussed above encode the constraints that the entropies of \emph{any} set of classical or quantum random variables are subject to. Classically, the causal relationships between the variables are encoded in the conditional independencies (CI) implied by the graph. These can be algorithmically enumerated using the so-called \emph{$d$-separation criterion} \cite{Pearlbook}. Therefore, if one further demands that classical random variables are a Bayesian network with respect to some given DAG, their entropies will also ensue the additional CI relations implied by the graph. The CIs, relations of the type $p(x,y \vert z)=p(x\vert z)p(y\vert z)$, defining non-linear constraints in terms of probabilities are faithfully translate to homogeneous linear constraints on the level of entropies, e.g. $H(X,Y \vert Z)=0$. The CIs involving jointly coexisting variables also hold for the quantum causal structures considered here \cite{Henson2014}. However, some classically valid CIs may, in the quantum case, involve non coexisting variables and therefore are not valid for quantum variables. An example of that is illustrated below for the information causality scenario.

Furthermore, because terms like $H(A,A_1,A_2)$ are not part of our description, we need, together with the CIs implied by the quantum causal structure, a rule telling us how to map the underlying quantum variables in their classical descendants, for example, how to map $H(A_1,A_2) \rightarrow H(A)$. This is achieved by the data processing (DP) inequality, another basic property that is valid both for the classical and quantum cases \cite{Nielsen2010quantum}. The DP inequality basically states that the information content of a system cannot be increased by acting locally on it. To exemplify, one DP inequality implied by the DAG in Fig. \ref{fig:triangle_and_IC} is given by $I(A:B) \leq I(A_1,A_2:B_1,B_2)$, that is, the mutual information between the classical variables cannot be larger then the information shared by their underlying quantum parents.

Finally, we are interested in situations where not all joint distributions are accessible. Most commonly, this is because the variables of a DAG can be divided into observable and not directly observable ones (e.g. the underlying quantum states in Fig. \ref{fig:triangle_and_IC}). Given the set of observable variables, in the classical case, it is natural to assume that any subset of them can be \emph{jointly} observed. However, in quantum mechanics that situation is more subtle. For example, position $Q$ and momentum $P$ of a particle are individually measurable, however, there is no way to consistently assign a joint distribution to both position and momentum of the same particle \cite{Bell1964}. That is while $H(Q)$ and $H(P)$ are part of the entropic description of classical-quantum DAGs, joint terms like $H(Q,P)$ cannot be part of it. This motivates the following definition: Given a set of variables $X_{1}, \dots, X_{n}$ contained in a DAG, a \emph{marginal scenario} $\mathcal{M}$ is the collection of those subsets of $X_1, \dots, X_n$ that are assumed to be jointly measurable.

Given the inequality description of the DAG and the marginal scenario $\mathcal{M}$ under consideration, the last step consists of eliminating from this inequality description, the variables that are not directly observable, that is the variables that are not contained in $\mathcal{M}$. This is achieved, for example, via a Fourier-Motzkin (FM) elimination (see appendix for further details). In two of the examples below (information causality and quantum networks), all the observable quantities correspond to classical variables, corresponding, for example, to the outcomes of measurements performed on quantum states. Therefore, the marginal description will be given in terms of linear inequalities involving Shannon entropies only. For the super dense coding case, the final description involves a quantum variable, therefore implying a mixed inequality with Shannon as well von Neumann entropy terms.

\section{Information Causality}
The IC principle can be understood as a kind of game: Alice receives a bit string $x$ of length $n$, while Bob receives a random number $s$ ($1 \leq s \leq n$). Bob's task is to make a guess $Y_s$ about the $s$th bit of the bit string $x$ using as resources i) a $m$-bit message $M$ sent to him by Alice and ii) some correlations shared between them. It would be expected that the amount of information available to Bob about $x$ should be bounded by the amount of information contained in the message, that is, $H(M)$. IC makes this notion precise, stating that the following inequality is valid in quantum theory \cite{Pawlowski2009}
\begin{equation}
\label{IC1}
\sum_{s=1}^{n} I(X_s:Y_s) \leq H(M)
\end{equation}
where $I(X:Y)$ is the classical mutual information between the variables $X$ and $Y$ and the input bits of Alice are assumed to be independent. This inequality is valid for quantum correlations but is violated by all nonlocal correlations beyond Tsirelson's bound \cite{Pawlowski2009,Barnum2010entropy,Dahlsten2012tsirelson}.

Consider the case where $X=(X_1,X_2)$ is a 2-bit string. The corresponding causal structure to the IC game is then the one shown in Fig. \ref{fig:triangle_and_IC} b). The only relevant CI is given by $I(X_1,X_2: AB)=0$. Note that classically the CI $I(X_1,X_2:Y_s \vert M,B)=0$ (with $s=1,2$) would also be part of our entropic description. However, because we cannot assign a joint entropy to $Y_s$ and $\rho_B$, that is not possible in quantum case anymore. We can now proceed with the general framework. But before doing that we first need to specify in which marginal scenario we are interested. In Ref. \cite{Pawlowski2009} the authors implicitly restricted their attention to the marginal scenario defined by $\left\{ X_1,Y_1 \right\},\left\{ X_2,Y_2 \right\},\left\{M \right\}$. Proceeding with this marginal scenario we find that the only non-trivial inequality characterizing this marginal entropic cone is given by
\begin{equation}
\label{IC2}
I(X_1:B_1)+I(X_2:B_2) \leq H(M)+I(X_1:X_2),
\end{equation}
that corresponds exactly to the IC inequality obtained in \cite{Safi2011} where the input bits are not assumed to be independent.

Note, however, that using the aforementioned marginal scenario, available information is being discarded. The most general possible marginal scenario is given by $\left\{ X_1,X_2,Y_s,M \right\}$ (with $s=0,1$). That is, in this case we are also interested in how much information the guess $Y_1$ of the bit $X_1$ together with the message $M$ may contain about the bit $X_2$ (similarly for $B_2$ and $X_1$). Proceeding with this marginal scenario we find different classes of non-trivial tight inequalities describing the marginal information causality cone. Of particular relevance is the following tighter version of the original IC inequality
\begin{eqnarray}
\label{ICtighter}
\nonumber
& I(X_1:Y_1,M)+I(X_2:Y_2,M) +I(X_1:X_2 \vert Y_2,M) \\
& \leq H(M)+I(X_1:X_2).
\end{eqnarray}
Two different interpretations can be given to this inequality: as a monogamy of correlations or as a classical quantification of causal influence.

For the first interpretation, consider for simplicity the case where the input bits are independent, that is, $I(X_1:X_2)=0$. These independent variables may, however, become correlated given we know the values of other variables that depend on them. That is, in general $I(X_1:X_2 \vert Y_2,M) \neq 0$. However, the underlying causal relationships between the variables impose constraints on how much we can correlate these variables. In fact, as we can see from \eqref{ICtighter}, the more information the message $M$ and the guess $Y_i$ contain about about the input bit $X_i$, the smaller is the correlation we can generate between the input bits. As an extreme example suppose Alice decides to send $M=X_1\oplus X_2$. Then $X_1$ and $X_2$ are fully correlated given $M$, but $M$ doesn't contain any information about the individual inputs $X_1$ and $X_2$.

As for the second interpretation, we need to rely on the classical concept of how to quantify causal influence between two sets of variables $X$ and $Y$. As shown in \cite{Janzing2013}, a good measure $\mathcal{C}_{X \rightarrow Y}$ of the causal influence of a variable $X$ over a variable $Y$ should be lower bounded as $\mathcal{C}_{X \rightarrow Y} \geq I(X:Y \vert Pa^X_Y)$, where $Pa^X_Y$ stands for all the parents of $Y$ but $X$. That is, excluded the correlations between $X$ and $Y$ that are mediated via $Pa^X_Y$, the remaining correlations give a lower bound to the direct causal influence between the variables. Consider for instance that we allow for an arrow between the input bits $X$ and the guess $Y$. Therefore, the classical CI $I(X_1,X_2:Y_1,Y_2 \vert M,B)=0$ that is valid for the DAG in Fig. \ref{fig:triangle_and_IC} b), does not hold any longer. In this case $I(X:Y \vert Pa^X_Y)=I(X_1,X_2:Y_1,Y_2 \vert M,B)$, an object that is part of the entropic description in the classical case. Proceeding with the general framework one can prove that
\begin{eqnarray}
\label{causal_interpretation}
\nonumber
& \mathcal{C}_{X \rightarrow Y} \geq I(X_1:Y_1,M)+I(X_2:Y_2,M)  \\
& +I(X_1:X_2 \vert Y_2,M) - H(M)- I(X_1:X_2).
\end{eqnarray}
That is, the degree of violation of \eqref{ICtighter} (for example, via a PR-box) gives exactly the minimum amount of direct causal influence required to obtain the same level of correlations within a classical model.

Inequality \eqref{ICtighter} refers to the particular case of two input bits for Alice. As we prove in the appendix the following generalization for any number of input bits is valid within quantum theory:
\begin{eqnarray}
\label{ICtighter2}
\nonumber
\sum^{n}_{i=1} I(X_i:Y_i,M) + \sum^{n}_{i=2} I(X_1:X_i \vert Y_i,M) \\
\leq H(M)+ \sum^{n}_{i=1}H(X_i)-H(X_1,\dots,X_{n}).
\end{eqnarray}

We further notice that the IC scenario is quite similar to the super dense coding scenario \cite{Bennett1992}, the only difference being on the fact that for the latter the message $M$ is a quantum state. On the level of the entropies this difference is translated in the fact that the monotonicity $H(M\vert X_0,X_1,B) \geq 0$ must be replaced by a the weak monotonicity $H(M\vert X_0,X_1,B) +H(M) \geq 0$. As proved in the appendix this implies that a similar inequality \eqref{ICtighter2} is a also valid for the super dense coding scenario if one replaces $H(M)$ by $2H(M)$, that is, a quantum message (combined with the shared entangled state) may allow for the double of information to be transmitted.

Finally, to understand how much more powerful inequality \eqref{ICtighter} may be in order to witness postquantum correlations we perform a similar analysis to the one in Ref. \cite{Allcock2009recovering}. We consider the following section of the nonsignalling polytope
\begin{equation}
\label{sec_poly}
p(a,b \vert x,y)=\gamma P_{PR}+\epsilon P_{det}+(1-\gamma-\epsilon) P_{white}
\end{equation}
with $P_{PR}(a,b \vert x,y)=(1/2)\delta_{a\oplus b, xy}$, $P_{white}(a,b \vert x,y)=1/4$ and $P_{det}(a,b \vert x,y)=\delta_{a,0}\delta_{b,0}$ corresponding, respectively to the PR-box, white noise and a deterministic box. The results are shown in Fig. \ref{fig:ICplot} where it can be seen that the new inequality is considerably more powerful then the original one. It can for instance witness the postquantumness of distributions that could not be detected before even in the limit of many copies.

\begin{figure} [!t]
\centering
\includegraphics[width=0.45\textwidth]{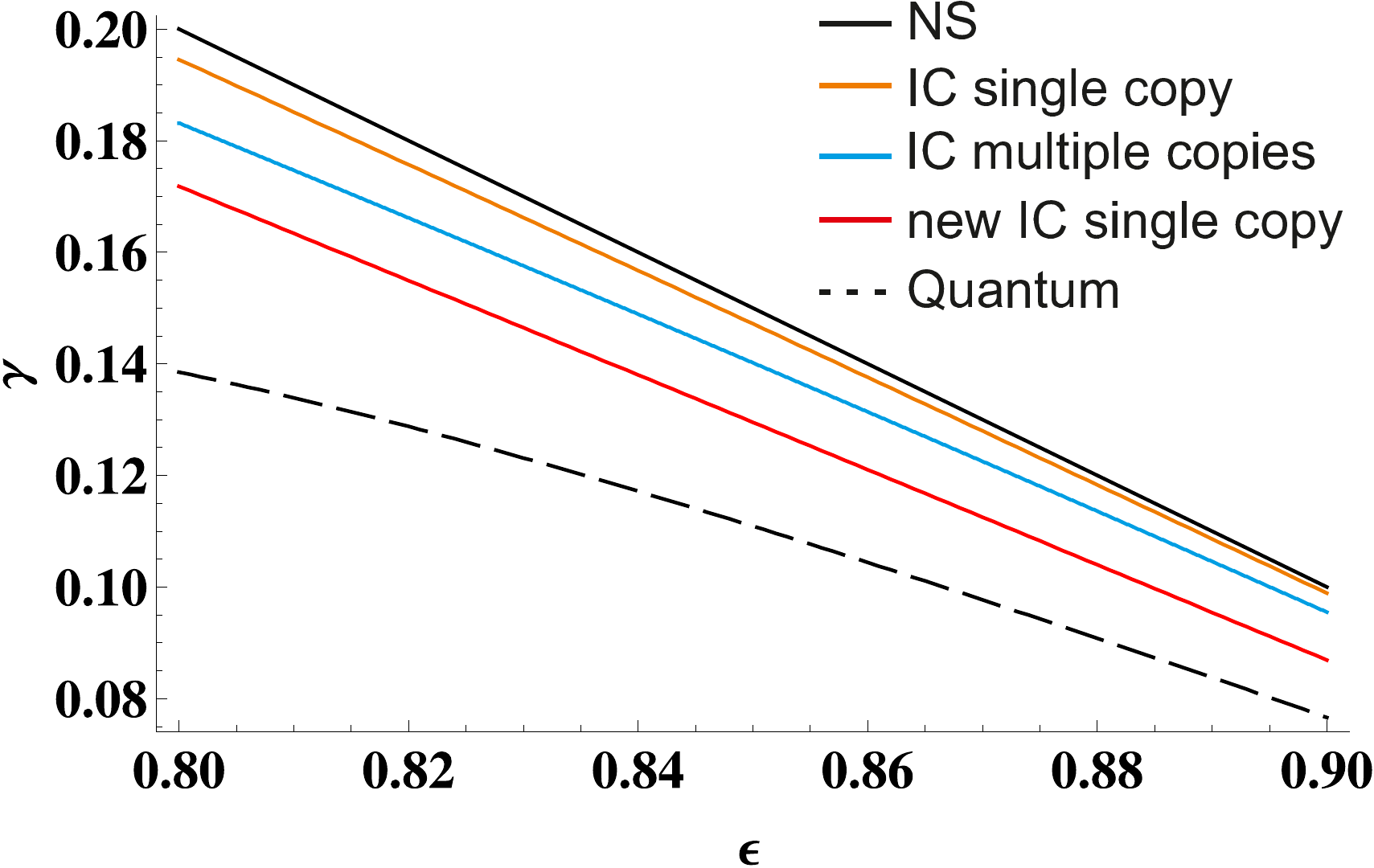}
\caption{A slice of the non-signalling polytope corresponding to the distributions \eqref{sec_poly}. The lower black dashed line is an upper limit on quantum correlations obtained via the criterion in Ref. \cite{Navascues2007} while the upper solid black line bounds the set of non-signalling correlations. The solid red, blue and orange curves correspond, respectively, to the boundaries obtained with the IC inequalities \eqref{ICtighter}, \eqref{IC1} and \eqref{IC2}. Above each of this curves, the corresponding inequalities are violated. See appendix for details of how this curves are computed.
}
\label{fig:ICplot}
\end{figure}

\section{Quantum networks} Quantum networks are ubiquitous in quantum
information. The basic scenario consists of a collection of entangled
states that are distributed among several spatially separated parties
in order to perform some informational task, e.g., entanglement
percolation \cite{Acin2007entanglement}, entanglement swapping
\cite{Zukowski1993} or distributed computing
\cite{van2014quantum,buhrman2003distributed}.
A similar setup
is of relevance in classical causal inference, namely the inference of
latent common ancestors \cite{Steudel2010,Chaves2014b}. As we will
show next, the topology alone of these quantum networks imply
non-trivial constraints on the correlations that can be obtained
between the different parties. We will consider the particular case
where all the parties can be connected by at most bipartite states. We
note, however, that our framework applies as well to the most general
case and results along this line are presented in the appendix.

The problem can be restated as follows. Consider $n$ observable variables that may be assumed to have no direct causal influence on each other (as they are space-like separated). Given some observed correlations between them, the basic question is then: Can the correlations between these $n$ variables be explained by (hidden) common ancestors connecting at most $2$ of them? The simplest of such common ancestors scenarios ($n=3$), the so called triangle scenario \cite{Steudel2010,Branciard2012,Fritz2012}, is illustrated in Fig. \ref{fig:triangle_and_IC}.

In the case where the underlying hidden variables are classical (for example, separable states), the entropic marginal cone associated to this DAG has been completely characterized in Ref. \cite{Chaves2014}. Following the framework delineated before, we can prove that the same cone is obtained if we replace the underlying classical variables by quantum states (see appendix). This implies that \emph{entropically there are no quantum correlations in the triangle scenario}.

The natural question is how to generalize this result to more general common ancestor structures for arbitrary $n$. With this aim, we prove in the appendix that the monogamy relation
\begin{equation}
\label{ineq_mn}
\sum_{\substack{
   i=1,\cdots,n \\
   i \neq j
  }}I(V_{i}:V_{j})\leq H(V_{j}),
\end{equation}
recently derived in \cite{Chaves2014b} is also valid for quantum theory. We also prove in the appendix that this inequality is valid for general non-signalling theories, generalizing the result obtained in \cite{Henson2014} for $n=3$. In addition we exhibit that for any nontrivial common ancestor structure there are entropic corollaries even if we allow for general non-signalling parents.

The inequality \eqref{ineq_mn} can be seen as a kind of monogamy of correlations. Consider for instance the case $n=3$ and label the commons ancestor (any nonsignalling resource) connecting variables $V_i$ and $V_j$ by $\rho_{i,j}$. If the dependency between $V_1$ and $V_2$ is large, that means that $V_1$ has a strong causal dependence on their common mutual ancestor $\rho_{1,2}$. That implies that $V_1$ should depend only mildly on its common ancestor $\rho_{1,3}$ and therefore its correlation with $V_3$ should also be small. The inequality \eqref{ineq_mn} makes this intuition precise.

\section{Discussion}
\label{sec:discussion}

In this work, we have introduced a systematic algorithm for computing information-theoretic constraints arising from quantum causal structures. Moreover, we have demonstrated the versatility of the framework by applying it to a set of diverse examples from quantum foundations, quantum communication, and the analysis of distributed architectures. In particular, our framework readily allows to obtain a much stronger version of information causality.

These examples aside, we believe that the main contribution of this work is to highlight the power of systematically analyzing entropic marginals. A number of future directions for research immediately suggest themselves. In particular, it will likely be fruitful to consider multi-partite versions of information causality or other information theoretical principles and to further look into the operational meaning of entropy inequality violations.

\bigskip
\begin{acknowledgements}
We acknowledge support by the Excellence Initiative of the German
Federal and State Governments (Grant ZUK 43), the Research Innovation
Fund from the University of Freiburg. DG's research is supported by
the US Army Research Office under contracts W911NF-14-1-0098 and
W911NF-14-1-0133 (Quantum Characterization, Verification, and
Validation). CM acknowledges support by the German National Academic
Foundation.
\end{acknowledgements}

\section{Appendix}

\subsection{A linear program framework to entropic inequalities}

Given the inequality description of the entropic cone describing a causal structure, to obtain the description of an associated marginal scenario $\mathcal{M}$ we need to eliminate from the set of inequalities all variables not contained in $\mathcal{M}$. After this elimination procedure, we obtain a new set of linear inequalities, constraints that correspond to facets of a convex cone, more precisely the marginal entropic cone characterizing the compatibility region of a certain causal structure \cite{Chaves2014}. This can be achieved via a Fourier-Motzkin (FM) elimination, a standard linear programming algorithm for eliminating variables from systems of inequalities \cite{Williams1986}. The problem with the FM elimination is that it is a double exponential algorithm in the number of variables to be eliminated. As the number of variables in the causal structure of interest increases, typically this elimination becomes computationally intractable.

While it can be computationally very demanding to obtain the full description of a marginal cone, to check if a given candidate inequality is respected by a causal structure is relatively easy. Consider that a given causal structure leads to a number $N$ of possible entropies. These are organized in a $n$-dimensional vector $\boldsymbol{h}$. In the purely classical case, the graph consisting of $n$ nodes ($X_1,\dots, X_n$) will lead to a $N=2^n$ dimensional entropy vector that can be organized as $\boldsymbol{h}=(H(\emptyset), H(X_n), H(X_{n-1}),H(X_{n-1}X_n),\dots,H(X_1,\dots,X_{n}))$. In the quantum case, since not all subsets of variables may jointly coexist we will have typically that $N$ is strictly smaller than $2^n$.

As explained in details in the main text, for this entropy vector to be compatible with a given causal structure, a set of linear constraints must be fulfilled. These linear constraints can be casted as a system of inequalities of the form $M\boldsymbol{h} \geq \boldsymbol{0}$, where $M$ is a $m \times N$ matrix with $m$ being the number of inequalities characterizing the causal structure.

Given the entropy vector $\boldsymbol{h}$, any entropic linear inequality can be written simply as the inner product $\langle \boldsymbol{\mathcal{I}}, \boldsymbol{h} \rangle \geq 0$, where $\boldsymbol{\mathcal{I}}$ is the associated vector to the inequality. A sufficient condition for a given inequality to be valid for a given causal structure is that the associated set of inequalities $M\boldsymbol{h} \geq \boldsymbol{0}$ to be true for any entropy vector $\boldsymbol{h}$. That is, to check the validity of a test inequality, one simply needs to solve the following linear program:
\begin{eqnarray}
\underset{\boldsymbol{h} \in \RR^N}{\minimize} & & \langle \boldsymbol{\mathcal{I}}, \boldsymbol{h} \rangle  	\label{LP} \\
\st & &  M\boldsymbol{h} \geq \boldsymbol{0} \nonumber
\end{eqnarray}

In general, this linear program only provides a sufficient but not necessary condition for the validity of a inequality. The reason for that is the existence of non-Shannon type inequalities, that are briefly discussed below.

\subsection{Details about the new IC inequality}
In the following we will discuss how to characterize the most general marginal scenario in the information causality scenario. We will start discussing the purely classical case (i.e Alice and Bob share classical correlations) and afterwards apply the linear program framework to prove that all inequalities characterizing the classical Shannon cone are also valid for quantum mechanical correlations.

The classical causal structure associated with information causality contains six classical variables $S=\left\{X_1,X_2,Y_1,Y_2,M,\lambda \right\}$. The variable $\lambda$ stands here for the classical analog of the quantum state $\rho_{AB}$. The most general marginal scenario that is compatible with the information causality game and thus with protocols using more general resources such as nonlocal boxes is given by $\mathcal{M}=\left\{ X_1,X_2,Y_i,M \right\}$ (with $i=1,2$). The relevant conditional independencies implied by the graph are given by $I(X_1,X_2:\lambda)=0$ and $I(X_1,X_2:Y_1,Y_2 \vert M, \lambda)=0$. CIs like $I(X_1:Y_1 \vert M, \lambda)=0$ are implied by the relevant ones together with the polymatroidal axioms for the set $S$ of variables, and in this sense are thus redundant. Given this inequality description (basic inequalities plus CIs) we need to eliminate from our description, via a FM elimination, all the variable not contained in $\mathcal{M}$.

Our first step was to eliminate from the system of inequalities the variable $\lambda$. Doing that one obtains a new set of inequalities for the five variables $S=\left\{X_1,X_2,Y_1,Y_2,M \right\}$. These set of inequalities is simply given by the basic inequalities plus one single non-trivial inequality, implied by the CIs:
\begin{equation}
H(Y_1,Y_2,M)+H(X_1,X_2) \leq H(M) +H(X_1,X_2,Y_1,Y_2,M)
\end{equation}

We then proceed eliminating all variables not contained in $\mathcal{M}$. The final inequality description of the marginal cone of $\mathcal{M}$ can be organized in two groups. The first group contains all inequalities that are valid for the collection of variables in $\mathcal{M}$ independently of the underlying causal relationships between them, that is, they follow from the basic inequalities alone. The second group contains the inequalities that follow from the basic inequalities plus the conditional independencies implied by the causal structure. These are the inequalities capturing the causal relations implied by information causality and there are $54$ of them. Among these $54$ inequalities, one of particular relevance is the tighter IC inequality \eqref{ICtighter} given in the main text.

One can prove, using the linear program framework delineated before, that this inequality is also valid for the corresponding quantum causal structure shown in Fig. \ref{fig:triangle_and_IC} b). Following the discussion in the main text, the sets of jointly existing variables in the quantum case are given by $S_0=\left\{X_1,X_2,A,B \right\}$, $S_1=\left\{X_1,X_2,M,B \right\}$ and $S_2=\left\{X_1,X_2,M,Y_i \right\}$ (with $i=1,2$). One can think about these sets of variables in a time ordered manner. At time $t=0$ the joint existing variables are the inputs $X_1$ and $X_2$ of Alice, together with the shared quantum state $\rho_{A,B}$. At time $t=1$ Alice encodes the input bits into the message $M$ also using her correlations with Bob obtained through the shared quantum state. Doing that, Alice disturbs her part $A$ of the quantum system that therefore does not coexist anymore with the variables defined in $S_1$. In the final step of the protocol at time $t=2$, Bob uses the received message $M$ and its part $B$ of the quantum state in order to make a guess $Y_1$ or $Y_2$ about Alice's inputs. Once more, by doing that $B$ ceases to coexist with the variables contained in $S_2$.

Following the general idea, we write down all the basic inequalities for the sets $S_0$ and $S_1$ and $S_2$, together with the conditional independencies and the data processing inequalities. As discussed before, because the quantum analogous of $I(X_1,X_2:Y_1,Y_2 \vert M, \lambda)=0$ has no description in the quantum case, the only CI implied here will be $I(X_1,X_2 : A,B)=0$. The causal relations encoded in the other CIs are taken care by the data processing inequalities. Below we list all used data processing inequalities:

\begin{eqnarray*}
I(X_1,X_2:Y_i)  & &  \leq I(X_1,X_2:M,B)  \\ \nonumber
I(X_i:Y_j)  & &  \leq I(X_i:M,B) \\ \nonumber
I(X_i: X_{i \oplus 1}, Y_j)  & &  \leq I(X_i: X_{i \oplus 1}, B) \\ \nonumber
I(X_1,X_2:Y_i,M)  & &  \leq I(X_1,X_2:B,M) \\ \nonumber
I(X_i:Y_j,M)  & &  \leq I(X_i:B,M) \\ \nonumber
I(X_i:X_{i \oplus 1}, Y_j, M) \leq  & &  I(X_i:X_{i \oplus 1}, B, M)
\end{eqnarray*}

Note that some of these DP inequalities may be redundant, that is, they may be implied by other DP inequalities together with the basic inequalities.

We organize all the above constraints into a matrix $M$ and given a certain candidate inequality $\mathcal{I}$ we run the linear program discussed before. Doing that one can easily prove that inequality \eqref{ICtighter} is also valid in the quantum case.

Note that this computational analysis will in general be restricted by the number of variables involved in the causal structure. To circumvent that we provide in the following an analytical proof of the validity of the generalized IC inequality \eqref{ICtighter2} for the quantum causal structure in Fig. \ref{fig:triangle_and_IC} b).
\begin{proof}
First rewrite the following conditional mutual information as
\begin{equation}
I(X_1:X_i \vert Y_i,M) = I(X_1:X_i, Y_i,M)-I(X_i: Y_i,M).
\end{equation}
The LHS of the inequality \eqref{ICtighter2} can then be rewritten as
\begin{equation}
I(X_1:Y_1,M)+\sum^{n}_{i=2}I(X_i:X_1,Y_i,M).
\end{equation}
This quantity can be upper bounded as
\begin{widetext}
\begin{eqnarray}
& \leq I(X_1:B,M) +\sum^{n}_{i=2}I(X_i:X_1,B,M) \\
& = \sum^{n}_{i=1} H(X_i)+H(B,M)+(n-2)H(X_1,B,M) - \sum^{n}_{i=2} H(X_1,X_i,B,M) \\
& \leq \sum^{n}_{i=1} H(X_i)+H(B,M) - H(X_1,\dots,X_{n},B,M) \\
& \leq \sum^{n}_{i=1} H(X_i)+H(B,M) - H(X_1,\dots,X_{n},B) \\
& = \sum^{n}_{i=1} H(X_i)+H(B,M) - H(X_1,\dots,X_{n})-H(B) \\
& \leq \sum^{n}_{i=1} H(X_i)+H(M) - H(X_1,\dots,X_{n})
\end{eqnarray}
\end{widetext}
leading exactly to the inequality \eqref{ICtighter2}. In the proof above we have used consecutively i) the data processing inequalities $I(X_1:Y_1,M) \leq I(X_1:B,M) $ and $I(X_i:X_1,Y_i,M) \leq I(X_i:X_1,B,M)$, ii) the fact that $-\sum^{N-1}_{i=1} H(X_i,X_1,B,M) \leq - H(X_1,\dots,X_{n},B,M)-(n-2)H(X_1,B,M) $ (as can be easily be proved inductively using the strong subadditivity property of entropies), iii) the monotonicity $H(M\vert X_1,\dots,X_{n},B) \geq 0$, iv) the independence relation $I(X_1,\dots, X_{n}:B)=0$ and v) the positivity of the mutual information $I(B:M) \geq 0$
\end{proof}

Note that this proof can be easily adapted to the case where the message $M$ sent from Alice to Bob is a quantum state. In this case there two differences. First, because the message is disturbed in order to create the guess $Y_i$, we cannot assign a entropy to $M$ and $Y_i$ simultaneously. That is, in the LHS side of the inequality \eqref{ICtighter2} we replace  $I(X_i:Y_i,M) \rightarrow I(X_i:Y_i) $ and $I(X_1:X_i \vert Y_i,M) \rightarrow I(X_1:X_i \vert Y_i) $. The second difference is in step iii), because we have used the monotonicity $H(M\vert X_1,\dots,X_{n},B) \geq 0$ that is not valid for a quantum message. Instead of that, we can use a weak monotonicity inequality, namely $H(M\vert X_1,\dots,X_{n},B) + H(M) \geq 0$. Therefore, in the final inequality \eqref{ICtighter2}, $I(X_i:Y_i,M) \rightarrow I(X_i:Y_i) $ and $I(X_1:X_i \vert Y_i,M) \rightarrow I(X_1:X_i \vert Y_i)$ and $H(M)$ is replaced by $2H(M)$ (here, $H$ standing for the von Neumann entropy), leading exactly to what should be expected of a super dense coding \cite{Bennett1992}.

\subsection{Proving that in the triangle scenario the classical and quantum marginal cones coincide}
Since 1998 it is known that, for a number of variables $n \geq 4$, there are inequalities valid for Shannon entropies that cannot be derived from the elemental set of polymatroidal axioms (submodularity and monotonicity) \cite{Zhang1998}. These are the so called non-Shannon type inequalities \cite{Yeung2008}. More precisely, the existence of these inequalities imply that the true entropic cone (denoted by $\overline{\Gamma}^{*}_n$) is a strict subset of the Shannon cone, that is, the inclusion $ \overline{\Gamma}^{*}_n \subseteq \Gamma_n$ is strict for $n \geq 4$.

Remember that a convex cone has a dual description, either in terms of its facets or its extremal rays. In terms of its half-space description the strict inclusion $ \overline{\Gamma}^{*}_n \subset \Gamma_n$ implies that while all Shannon type inequalities are valid for any true entropy vector, they may fail to be tight. In terms of the extremal rays, this implies that some of the extremal rays of the Shannon cone are not populated, that is, there is no well defined probability distribution with an entropy vector corresponding to it.

Sometimes, the projection of the outer approximation $\Gamma_n$ onto a subspace, described by the marginal cone $\Gamma_{\mathcal{M}}$, may lead to the true cone in the marginal space, that is $\Gamma_{\mathcal{M}}=\overline{\Gamma}^{*}_{\mathcal{M}}$ \cite{FritzChaves2013}. A sufficient condition for that to happen is that all the extremal rays of $\Gamma_{\mathcal{M}}$ are populated. Using this idea, in the following we will prove that all the extremal rays describing the Shannon marginal cone of classical triangle scenario are populated, proving that in this case the Shannon and true marginal cones coincide. We will then use the linear program framework delineated previously in order to prove that all the corresponding inequalities are also valid for underlying quantum states, therefore proving that entropically the set of classical and quantum correlations coincide.

Proceeding with the three steps program delineated in the main text, one can see that the marginal scenario $\left\{ A, B, C \right\}$ of the triangle scenario is completely characterized by the following non-trivial Shannon type inequalities (and permutations thereof) \cite{Chaves2014}
\begin{widetext}
\begin{eqnarray}
\label{triangle_nontrivial_1}
&& I(A:B)+I(A:C) -H(A) \leq 0,  \\
\label{triangle_nontrivial_2}
&& I(A:B:C)+I(A:B)+I(A:C)+I(B:C) -H(A,B) \leq 0, \\
\label{triangle_nontrivial_3}
&& I(A:B:C)+I(A:B)+I(A:C)+I(B:C)- \frac{1}{2}( H(A)+H(B)+H(C) ) \leq 0,
\end{eqnarray}
\end{widetext}
plus the polymatroidal axioms for the three variables $A,B,C$. Given the inequality description of the marginal cone we have used the software PORTA \cite{porta} in order to recover the extremal rays of it. There are only $10$ extremal rays, that can be organized in the $4$ different types listed in Table \ref{extremal_rays} .

\begin{table*}
\begin{tabular}{|c| c| c c c c c c c|} \hline
\multicolumn{9}{|c|}{Extremal rays of the marginal triangle scenario}\\
\hline
type
&\textbf{\#} of permutations
&$H_{C}$&$H_{B}$&$H_{BC}$&$H_{A}$&$H_{AC}$&$H_{AB}$&$H_{ABC}$
\\
\hline
1
&3
&0&0&0&1&1&1&1
\\
\hline
2
&3
&0&1&1&1&1&1&1
\\
\hline
3
&1
&1&1&2&1&2&2&2
\\
\hline
4
&3
&3&3&5&2&4&4&6
\\
\hline
\end{tabular}
\caption{The four kinds of extremal rays defining the marginal entropic cone of the triangle scenario.} \label{extremal_rays}
\end{table*}
Below we list the probability distributions reproducing the $4$ different types of entropy vectors:
\begin{widetext}
\begin{equation}
\label{type1}
p_1\left(  a,b,c \right)  =\left\{
\begin{array}{ll}
1/2 & \text{if } a=\left\{1,2\right\} \text{ and } b=c=\left\{1\right\}\\
0 & \text{, otherwise}%
\end{array}
\right. ,
\end{equation}

\begin{equation}
\label{type2}
p_2\left(  a,b,c \right)  =\left\{
\begin{array}{ll}
1/2 & \text{if } a=b=\left\{1,2\right\} \text{ and } c=\left\{1\right\}\\
0 & \text{, otherwise}%
\end{array}
\right. ,
\end{equation}

\begin{equation}
\label{type3}
p_3\left(  a,b,c \right)  =\left\{
\begin{array}{ll}
1/4 & \text{if } a\oplus b \oplus c=0 \text{ with } a,b,c= \left\{ 1,2 \right\}\\
0 & \text{, otherwise}%
\end{array}
\right. ,
\end{equation}

and

\begin{equation}
\label{type4}
p_4\left(  a,b,c \right)  =\left\{
\begin{array}{ll}
1/64 & \text{if } a\oplus b + a\oplus c + b\oplus c=0 \text{ with } a=\left\{1,\dots,4\right\} \text{ and } b,c=\left\{1,\dots,8\right\}\\
0 & \text{, otherwise}%
\end{array}
\right. ,
\end{equation}
\end{widetext}

Since all the extremal rays are populated, this proves that $\Gamma_{\mathcal{M}}=\overline{\Gamma}^{*}_{\mathcal{M}}$ for the marginal scenario $\mathcal{M}= \left\{A,B,C \right\}$ of the triangle scenario.

To prove that the same entropic cone holds for the associated quantum causal structure (Fig. \ref{fig:triangle_and_IC} a) we just need to prove that all the inequalities defining $\Gamma_{\mathcal{M}}$ hold true in the quantum case. Clearly, the polymatroidal axioms for $\left\{A,B,C \right\}$ also hold true in the quantum case, since these variables are classical. Using the linear programming framework detailed above, one can also prove that the inequalities \eqref{triangle_nontrivial_1}, \eqref{triangle_nontrivial_2} and \eqref{triangle_nontrivial_3} hold if the underlying hidden variables stand for quantum states.

The sets of jointly existing variables in the quantum case are given by $S_0=\left\{A_1,A_2, B_1,B_2 , C_1,C_2 \right\}$, $S_1=\left\{A, B_1,B_2 , C_1,C_2 \right\}$, $S_2=\left\{A_1,A_2, B, C_1,C_2 \right\}$, $S_3=\left\{A_1,A_2, B_1,B_2 , C \right\}$, $S_4=\left\{A, B , C_1,C_2 \right\}$, $S_5=\left\{A, B_1,B_2 , C \right\}$, $S_6=\left\{A_1,A_2, B, C \right\}$ and $S_7=\left\{A, B, C \right\}$. The fact that the quantum states are assumed to be independent is translated in the CI $H(A_1,A_2, B_1,B_2 , C_1,C_2 )= H(\rho_{A_1,B_1})+ H(\rho_{A_2,C_1})+ H(\rho_{C_2,C_2})$. The causal constraint that the observable variables have no direct influence on each other (all the correlation are mediated by the underlying quantum states) is encoded in the CI given by $I(A:B \vert A_1)=I(A:B \vert B_1)=0$ (similarly for permutation of the variables). Below we also list all used data processing inequalities:
\begin{eqnarray*}
I(A:B)  & &  \leq I(A:B_1 ,B_2)  \\ \nonumber
I(A:B)  & &  \leq I(A_1 ,A_2:B) \\ \nonumber
I(A:B)  & &  \leq I(A_1 ,A_2:B_1 ,B_2) \\ \nonumber
I(A:B,C)  & &  \leq I(A:B, C_1 ,C_2)  \\ \nonumber
I(A:B,C)  & &  \leq I(A:B_1 ,B_2,C) \\ \nonumber
I(A:B,C)  & &  \leq I(A:B_1 ,B_2, C_1 ,C_2)  \\ \nonumber
I(A:B,C)  & &  \leq I(A_1 ,A_2:B, C_1 ,C_2)  \\ \nonumber
I(A:B,C)  & &  \leq I(A_1 ,A_2:B_1 ,B_2,C) \\ \nonumber
I(A:B,C)  & &  \leq I(A_1 ,A_2:B_1 ,B_2, C_1 ,C_2)  \\ \nonumber
I(A,B_1:B_2)  & &  \leq I(A_1 ,A_2,B_1:B_2) \\ \nonumber
I(A,C_1:C_2)  & &  \leq I(A_1 ,A_2,C_1:C_2)
\end{eqnarray*}
and similarly for permutations of all variables. Again, note that some of these DP inequalities may be redundant, that is, they may be implied by other DP inequalities together with the basic inequalities.

\subsection{Proving the monogamy relations of quantum networks}

In the following we provide an analytical proof of the monogamy inequality \eqref{ineq_mn} in the main text.

We start with the case $n=3$. For a Hilbert space $\mathcal{H}$ we denote the set of quantum states, i.e. the set of positive semidefinite operators with trace one, on it by $\mathbb{S}(\mathcal{H})$.

\begin{thm}\label{qtri}
Let $\rho_{A_1A_2B_1B_2C_1C_2}=\rho_{A_1 B_2}\otimes\rho_{B_1 C_2}\otimes\rho_{C_1 A_2}$ be a sixpartite quantum state on $\mathcal{H}=\mathcal{H}_{A_1}\otimes\mathcal{H}_{A_2}\otimes\mathcal{H}_{B_1}\otimes\mathcal{H}_{B_2}\otimes\mathcal{H}_{C_1}\otimes\mathcal{H}_{C_2}$. Let further $\Phi_N: \mathbb{S}\left(\mathcal{H}_{N_1}\otimes\mathcal{H}_{N_2}\right)\to \mathbb{S}\left(\mathcal{H}_N\right)$ be an arbitrary measurement for $N=A,B,C$. Then
\begin{equation}
  I(A:B)+I(A:C)\leq H(A).
 \end{equation}
\end{thm}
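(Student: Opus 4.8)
The plan is to mimic the elementary entropic proof of the classical triangle monogamy \eqref{triangle_nontrivial_1}, with the ``hidden common ancestor'' of $A$ and $B$ now played by the half $B_2$ of the shared state $\rho_{A_1B_2}$ that is fed to $\Phi_B$, and likewise $C_1$ playing the role of the half of $\rho_{C_1A_2}$ fed to $\Phi_C$. Only three tools are needed: the data processing inequality (to pass from the observed outputs $B,C$ to quantum inputs), strong subadditivity of the von Neumann entropy, and the fact -- valid precisely because the observed node $A$ is a measurement outcome, hence classical -- that $H(A\mid\sigma)\ge 0$ for arbitrary quantum side information $\sigma$ (the weak monotonicity property recalled in the main text).

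First I would apply $\Phi_A$ to the input state, obtaining $\sigma=\rho_{AB_2C_1}\otimes\rho_{B_1C_2}$ on $A,B_1,B_2,C_1,C_2$, where $\rho_{AB_2C_1}=(\Phi_A\otimes\mathrm{id}_{B_2C_1})(\rho_{A_1B_2}\otimes\rho_{C_1A_2})$ and $\rho_{B_1C_2}$ factors off because $\Phi_A$ does not act on it. Since $\Phi_B,\Phi_C$ are CPTP, data processing gives $I(A:B)\le I(A:B_1B_2)_\sigma$ and $I(A:C)\le I(A:C_1C_2)_\sigma$. In $\sigma$ the system $B_1$ is in a product state with $(A,B_2)$ -- it sits in the $\rho_{B_1C_2}$ factor -- so $I(A:B_1B_2)_\sigma=I(A:B_2)_\sigma$, and symmetrically $I(A:C_1C_2)_\sigma=I(A:C_1)_\sigma$. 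It therefore suffices to prove $I(A:B_2)+I(A:C_1)\le H(A)$ in the state $\rho_{AB_2C_1}$.

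For that inequality I would expand $I(A:B_2)+I(A:C_1)=2H(A)+H(B_2)+H(C_1)-H(AB_2)-H(AC_1)$ and lower-bound the two negative terms. Strong subadditivity gives $H(AB_2)+H(AC_1)\ge H(A)+H(AB_2C_1)$; classicality of $A$ gives $H(AB_2C_1)\ge H(B_2C_1)$; and since $\Phi_A$ is trace-preserving, $\tr_A\,\rho_{AB_2C_1}=\tr_{A_1A_2}(\rho_{A_1B_2}\otimes\rho_{C_1A_2})=\rho_{B_2}\otimes\rho_{C_1}$, so $H(B_2C_1)=H(B_2)+H(C_1)$. Chaining these three estimates yields $H(AB_2)+H(AC_1)\ge H(A)+H(B_2)+H(C_1)$; substituting back collapses the bracket to $-H(A)$ and gives exactly $I(A:B_2)+I(A:C_1)\le H(A)$, hence the theorem.

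The delicate point -- and the step I expect to need the most care -- is the bookkeeping of tensor factors through $\Phi_A$: this map genuinely does correlate $B_2$ with $C_1$ (it couples $A_1$, correlated with $B_2$, to $A_2$, correlated with $C_1$), so $\rho_{B_2C_1}$ becomes a product state only after $A$ has been traced out, and the discrepancy is absorbed precisely by $H(A\mid B_2C_1)\ge0$. This is also the one place where classicality of the observed nodes is used: monotonicity of the von Neumann entropy fails in general and enters here only through $H(\text{classical}\mid\text{quantum})\ge0$. As a cross-check one could run the mechanical route instead -- assemble the basic inequalities on the coexisting sets $S_0,\dots,S_7$, the independence constraint $H(A_1A_2B_1B_2C_1C_2)=H(\rho_{A_1B_2})+H(\rho_{B_1C_2})+H(\rho_{C_1A_2})$, and the relevant data processing inequalities into the matrix $M$, and verify \eqref{ineq_mn} for $n=3$ via the linear program \eqref{LP}.
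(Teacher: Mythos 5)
Your proof is correct and follows essentially the same route as the paper's: data processing to reduce $I(A:B)+I(A:C)$ to $I(A:B_2)+I(A:C_1)$ (the paper derives $I(A:B_1B_2)\le I(A:B_2)$ via the chain rule plus data processing and the independence of $\rho_{B_1C_2}$, while you get equality directly from the product structure of $\sigma$ --- both are valid), followed by strong subadditivity, the product form of $\rho_{B_2C_1}=\rho_{B_2}\otimes\rho_{C_1}$, and $H(A\mid B_2C_1)\ge 0$ from the classicality of $A$. The bookkeeping you flag as delicate is handled correctly, so there is nothing to add.
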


\begin{proof}
Data processing yields
 \begin{eqnarray}
 I(A:B)+I(A:C)&\leq& I(A:B_1B_2)+I(A:C_1C_2).
 \end{eqnarray}
 Then we exploit the chain rule twice and afterwards data processing again,
\begin{eqnarray}
 I(A:B_1B_2)&=&I(A:B_2)+I(A:B_1|B_2)\nonumber\\
 &=&I(A:B_2)+I(AB_2:B_1)-I(B_2:B_1)\nonumber\\
 &\leq &I(A:B_2)+I(A_1A_2B_2:B_1)\nonumber\\
 &=&I(A:B_2).
\end{eqnarray}
We have therefore
\begin{equation}
 I(A:B)+I(A:C)\leq I(A:B_2)+I(A:C_1),
\end{equation}
from which it follows that
\begin{eqnarray}\label{eq:proof-qtri}
  &I(A:B_2)+I(A:C_1)\nonumber\\
  & = 2H(A)+H(B_2)+H(C_1)-H(AB_2)-H(AC_1)\nonumber\\
  & \leq H(A)+H(B_2)+H(C_1)-H(AB_2C_1)\nonumber\\
  & = H(A)-H(A|B_2C_1)\nonumber\\
  &\leq H(A),
\end{eqnarray}
where in the second line we used strong subadditivity and in the last line we used that the entropy of a classical state conditioned on a quantum state is positive.

\end{proof}
This proof can easily be generalized to the case of an arbitrary number of random variables resulting from a classical-quantum Bayesian network in with each parent connects has at most two children.
\begin{cor}
 Let
 \begin{equation*}
  \rho=\bigotimes_{\substack{i,j=1\\ i<j}}^n\rho_{(ij),(ji)}
 \end{equation*}
be an $n(n-1)$-partite quantum state on
\begin{equation*}
 \mathcal{H}=\bigotimes_{\substack{i,j=1\\i\neq j}}^n \mathcal{H}_{(ij)},
\end{equation*}
and let
\begin{equation*}
 \Phi_{i}: \mathbb{S}\left(\bigotimes_{j\neq i} \mathcal{H}_{(ij)}\right)\to \mathbb{S}\left(\mathcal{H}_i\right)
\end{equation*}
be an arbitrary measurement for $i=1,...,n$. Then
\begin{equation}
 \sum_{i=2}^n I(1:i)\le H(1)
\end{equation}

\end{cor}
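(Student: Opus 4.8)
The plan is to lift the proof of Theorem~\ref{qtri} to $n$ parties essentially line by line, using exactly two ingredients already present in the paper: the data-processing inequality and the iterated strong-subadditivity estimate that appears as step (ii) in the proof of \eqref{ICtighter2}. Throughout I abbreviate by $i$ the classical outcome of $\Phi_i$, and I exploit the tensor structure that makes everything work: party $i$ holds the systems $\{\mathcal H_{(ij)}\}_{j\neq i}$, and among them only $\mathcal H_{(i1)}$ is correlated with anything held by party~$1$ (via $\rho_{(1i),(i1)}$); every other system of party~$i$ sits in a product state with the whole of party~$1$ and with $\mathcal H_{(i1)}$. I write $(i1)$ for the system $\mathcal H_{(i1)}$ and for its state.

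\emph{Step 1: reduce each $I(1:i)$ to $I(1:(i1))$.} For $i\geq 2$, data processing through $\Phi_i$ gives $I(1:i)\leq I\bigl(1:\{(ij)\}_{j\neq i}\bigr)$. Writing the remaining systems of party~$i$ as $R_i=\{(ij):j\neq 1,i\}$, the chain rule gives
\begin{equation*}
  I\bigl(1:\{(ij)\}_{j\neq i}\bigr)=I(1:(i1))+I\bigl(1,(i1):R_i\bigr)-I\bigl((i1):R_i\bigr).
\end{equation*}
Since $1=\Phi_1(\{(1k)\}_k)$, data processing once more yields $I(1,(i1):R_i)\leq I\bigl(\{(1k)\}_k,(i1):R_i\bigr)$, and this vanishes because $R_i$ is in a product state with $\{(1k)\}_k\cup\{(i1)\}$: the entanglement partners of the systems in $R_i$ are the $(ji)$ with $j\neq 1,i$, none of which lies in that set. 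As $I((i1):R_i)\geq 0$, this gives $I(1:i)\leq I(1:(i1))$ for every $i\geq 2$.

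\emph{Step 2: combine and close.} Summing $I(1:(i1))=H(1)+H((i1))-H(1,(i1))$ over $i=2,\dots,n$ and applying the iterated strong-subadditivity bound $\sum_{i=2}^{n}H(1,(i1))\geq (n-2)H(1)+H(1,(21),\dots,(n1))$ leaves $H(1)+\sum_{i=2}^{n}H((i1))-H(1,(21),\dots,(n1))$. Because the systems $(21),\dots,(n1)$ belong to pairwise-disjoint entangled pairs they are mutually uncorrelated, so $\sum_{i=2}^{n}H((i1))=H((21),\dots,(n1))$ and the expression collapses to $H(1)-H\bigl(1\mid (21),\dots,(n1)\bigr)\leq H(1)$, the last step because $1$ is classical and conditioning it on quantum systems cannot make its conditional entropy negative. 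Chaining with Step~1 gives $\sum_{i=2}^{n}I(1:i)\leq H(1)$.

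No step is genuinely hard: the argument is the $n=3$ proof plus one induction, so I do not anticipate a real obstacle. The only thing demanding care is the bookkeeping of the tensor structure — at every data-processing step one must verify that the subsystem being discarded really is uncorrelated with the ones kept, i.e.\ that the analogues of the identity $I(A_1A_2B_2:B_1)=0$ from Theorem~\ref{qtri} still hold after relabelling; this reduces, each time, to checking which ordered pair $(jk)$ is entangled with which. Writing out the strong-subadditivity induction (base case: SSA with conditioning system $1$; inductive step: fold in one more term $(k{+}1,1)$) is the only remaining piece of routine computation.
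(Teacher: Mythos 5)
Your proof is correct and follows essentially the same route as the paper's: the identical data-processing reduction $I(1:i)\le I(1:(i1))$ exploiting the product structure, followed by strong subadditivity and the positivity of the conditional entropy of a classical variable given quantum systems. The only difference is organizational: the paper runs an induction over networks obtained by merging $(n{-}1\,1)$ and $(n1)$ into a single primed system and reinvoking the $n=3$ argument, whereas you apply the iterated-SSA bound in closed form together with additivity of entropy for the mutually independent systems $(21),\dots,(n1)$ --- both amount to the same chain of inequalities.
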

\begin{proof}
First, utilize the independences in the same way as in the proof of Theorem \ref{qtri} to conclude
\begin{equation}
 \sum_{i=2}^n I(1:i)\le\sum_{i=2}^n I(1:(i1)).
\end{equation}
Now continue by induction. For $n=3$ we have, according to the proof of Theorem \ref{qtri},
\begin{equation}
 I(1:(21))+I(1:(31))\le H(1).
\end{equation}
Now assume
\begin{equation}
 \sum_{i=2}^{n-1} I(1:(i1))\le H(1).
\end{equation}
Using the proof of Theorem \ref{qtri} again and stopping before the last inequality in \eqref{eq:proof-qtri} we get
\begin{equation}
 I(1:(n-1 1))+I(1:(n1))\le I(1:(n-1 1)(n1)),
\end{equation}
i.e. we get
\begin{eqnarray}
  \sum_{i=2}^{n} I(1:(i1))\le \sum_{i=2}^{n-1}I(1:(i1)')\le H(1),
\end{eqnarray}
where we defined the primed systems by $\mathcal{H}_{(n-11)'}=\mathcal{H}_{(n-11)}\otimes \mathcal{H}_{(n1)}$, observing that this yields a classical-quantum bayesian network with $n-1$ nodes and and connectivity two and used the induction hypothesis.
\end{proof}

\subsection{Proving the monogamy relation for GPTs}

\begin{figure}
\begin{center}
\begin{tikzpicture}
\prep{-3}{0}{$\sigma_{AB}$}
\prep{0}{0}{$\sigma_{AC}$}
\prep{3}{0}{$\sigma_{BC}$}
\meas{-3}{2}{$X$}
 \meas{0}{2}{$Y$}
  \meas{3}{2}{$Z$}

  \draw (-3.33,0) -- (-3.33,2);
  \draw (-2.67,0) -- (-0.33,2);
  \draw (-0.33,0) -- (-2.67,2);
   \draw (3.33,0) -- (3.33,2);
  \draw (2.67,0) -- (0.33,2);
  \draw (0.33,0) -- (2.67,2);

\end{tikzpicture}
\end{center}
\caption{The GBN for the triangle}\label{gtri}
\end{figure}
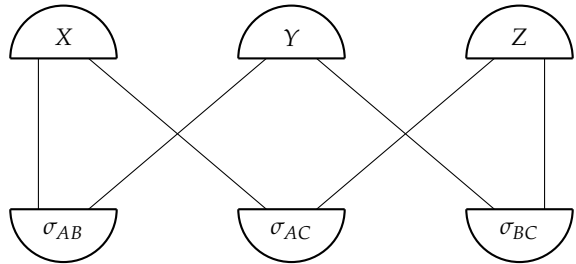

We want to prove the inequality
\begin{equation}\label{theineq}
 \sum_{\substack{j\in\{1,...,n\}\\ j\neq i}} I(V_i:V_j)\le H(V_i)
\end{equation}

For random variables that constitute a \emph{generalized Bayesian network} \cite{Henson2014} with respect to a DAG where each parent correlates at most two of them, i.e. the random variables are results of measurements on a set of arbitrary non-signalling resources shared between two parties. The case of three random variables has been proven in \cite{Henson2014}, the purpose of this appendix is to prove the generalization to an arbitrary number of random variables. Also we want to proof that for any fixed connectivity number for the parent nodes there are entropic corollaries. To this end we have to introduce a framework to handle generalized probabilistic theories that are non-signaling and have a property called local discriminability that was developed in \cite{Chiribella2010}.

An \emph{operational probabilistic theory} has two basic notions, \emph{systems} and \emph{tests}. Tests are the objects that represent any physical operation that is performed, e.g. the preparation of a state, or a measurement. A test has input and output systems and can have a  classical random variable as measurement outcome as well. An outcome together with the corresponding output system state is called \emph{event}. The components are graphically represented by a directed acyclic graph (DAG) where the nodes represent tests, and the edges represent systems. We use the convention that the diagram is read from bottom to top, i.e. a tests input systems are represented by edges coming from below and its output systems are edges emerging from the top of the node:

\begin{center}

\begin{tikzpicture}
\draw [black] (-.6,2.6) rectangle (.6,1.4);
\draw (0,3.6) -- (0,2.6);
\draw (0,1.4) -- (0,0.4);
\node at (0,2) {$X$};
\node at (.2,.9) {$A$};
\node at (.2,3.1) {$B$};
\end{tikzpicture}
\end{center}

If a system has trivial input or trivial output we omit the edge and represent the node by a half moon shape. Tests with trivial input are called \emph{preparations}, tests with trivial output are called \emph{measurements}.

\begin{center}

\begin{tikzpicture}
\draw [thick, black] (-2.5,1.2) arc [radius=1, start angle=180, end angle= 360];
\draw [thick] (-2.5,1.2) -- (-.5,1.2);
\draw [thick] (2.5,1.2) -- (.5,1.2);
\draw [thick] (.5,1.2) arc [radius=1, start angle=180, end angle=0];
\draw (1.5,0) -- (1.5,1.2);
\draw (-1.5, 1.2) -- (-1.5,2.4);
\node at (-1.5,.8) {$X$};
\node at (1.5,1.6) {$Y$};
\node at (1.7,.5) {$B$};
\node at (-1.3,1.9) {$A$};
\end{tikzpicture}
\end{center}

If we do not need to talk about the systems we omit the labels of the edges, and preparation tests are given greek letter labels, as they have, without loss of generality, only a single event.

\begin{center}
\begin{tikzpicture}
\draw [thick, black] (-2.5,1.2) arc [radius=1, start angle=180, end angle= 360];
\draw [thick] (-2.5,1.2) -- (-.5,1.2);
\draw [thick] (2.5,1.2) -- (.5,1.2);
\draw [thick] (.5,1.2) arc [radius=1, start angle=180, end angle=0];
\draw (1.5,0) -- (1.5,1.2);
\draw (-1.5, 1.2) -- (-1.5,2.4);
\node at (-1.5,.8) {$\sigma$};
\node at (1.5,1.6) {$Y$};
\end{tikzpicture}
\end{center}

Finally we assume, just as it is done in \cite{Henson2014}, that there exists a unique way of discarding a system, which we denote by

\begin{center}
\begin{tikzpicture}
\earth{0}{1.2}
\draw (0,0) -- (0,1.2);
\node at (1,0) {.};
\end{tikzpicture}
\end{center}

We call this the \emph{discarding test}, and it is shown in \cite{Chiribella2010} that its existence and uniqueness is equivalent to the non-signaling condition.
These elements can now be connected by using the output system of one test as input system for another. An arrangment of tests is called a \emph{generalized Bayesian network} (GBN). We also say that the arrangement forms a GBN \emph{with respect to} a DAG $\mathcal{G}$, or that a GBN has \emph{shape} $\mathcal{G}$, if the tests are arranged according to it, analogous to the classical case introduced in the main text.

The main ingredient for the proof of \eqref{theineq} for $n=3$ in \cite{Henson2014} is the following
\begin{lem}[\cite{Henson2014}, Thm. 23.]\label{cutfree32}
For any probability distribution $p(x,y,z)$ of random variables that are the classical output of a GBN with respect to the DAG in Figure \ref{gtri} there is a probability distribution $p'$ such that
\begin{eqnarray}
 p'(x,z)&=&p'(x)p'(z)\\
 p'(x,y)&=&p(x,y)\\
 p'(y,z)&=&p(y,z)
\end{eqnarray}
\end{lem}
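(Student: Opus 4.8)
The plan is to construct $p'$ by ``cutting'' the single resource shared between the two variables we wish to decorrelate, namely $X$ and $Z$. In the GBN of Figure~\ref{gtri} the measurement $X$ consumes the two $A$-wires emerging from the preparations $\sigma_{AB}$ and $\sigma_{AC}$, the measurement $Y$ consumes the two $B$-wires from $\sigma_{AB}$ and $\sigma_{BC}$, and $Z$ consumes the two $C$-wires from $\sigma_{AC}$ and $\sigma_{BC}$. The only preparation feeding \emph{both} $X$ and $Z$ is $\sigma_{AC}$. Using the discarding test of \cite{Chiribella2010}, I would denote by $\sigma_{AC}^{A}$ (resp.\ $\sigma_{AC}^{C}$) the single-system state obtained from $\sigma_{AC}$ by discarding its $C$-half (resp.\ its $A$-half), and set $\sigma'_{AC}:=\sigma_{AC}^{A}\otimes\sigma_{AC}^{C}$. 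Let $p'(x,y,z)$ be the outcome distribution of the GBN obtained from the original one by replacing the preparation $\sigma_{AC}$ with $\sigma'_{AC}$ and leaving $\sigma_{AB}$, $\sigma_{BC}$ and all three measurements untouched.

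Next I would verify the three required identities by computing each two-variable marginal as a discarding operation. Marginalising $p'$ over $y$ amounts to coarse-graining the outcomes of $Y$, i.e.\ feeding its two $B$-wires into the (unique, deterministic) discarding test; by uniqueness of discarding, equivalently no-signalling, this reduces $\sigma_{AB}$ to its $A$-marginal and $\sigma_{BC}$ to its $C$-marginal. After the cut, the inputs of $X$ (the $A$-marginal of $\sigma_{AB}$ together with $\sigma_{AC}^{A}$) and the inputs of $Z$ (the $C$-marginal of $\sigma_{BC}$ together with $\sigma_{AC}^{C}$) sit in a fully factorised preparation, so by local discriminability the joint effect $E^X_x\otimes E^Z_z$ evaluated on this preparation splits as a product, giving $p'(x,z)=p'(x)\,p'(z)$, which is the first identity.

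For the two preserved marginals I would argue that the cut is invisible once the decorrelated party is discarded. Marginalising over $z$ discards the two $C$-wires of $Z$: in the original GBN this sends $\sigma_{AC}$ to $\sigma_{AC}^{A}$, while in $p'$ it discards the $C$-factor of $\sigma'_{AC}=\sigma_{AC}^{A}\otimes\sigma_{AC}^{C}$ and again leaves $\sigma_{AC}^{A}$; the contributions of $\sigma_{AB}$ and $\sigma_{BC}$ are identical in both cases. Hence the effective preparation and measurements seen by $X$ and $Y$ coincide, giving $p'(x,y)=p(x,y)$. The identity $p'(y,z)=p(y,z)$ follows by the mirror-image computation: marginalising over $x$ discards the two $A$-wires of $X$ and reduces both $\sigma_{AC}$ and $\sigma'_{AC}$ to the same state $\sigma_{AC}^{C}$.

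The step I expect to be the main obstacle is the rigorous justification, within the operational-probabilistic-theory formalism of \cite{Chiribella2010}, that these ``discard one half, keep its marginal'' manipulations are well defined and mutually consistent. The crux is that the marginal states $\sigma_{AC}^{A}$ and $\sigma_{AC}^{C}$ are unambiguous \emph{only} because the discarding test is unique --- which is precisely the no-signalling hypothesis built into the GBN --- and that a joint outcome probability on a product preparation factorises, which is exactly what local discriminability supplies. Once these two structural facts are in place, the three identities reduce to the bookkeeping above, reproducing Theorem~23 of \cite{Henson2014}.
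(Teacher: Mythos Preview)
Your construction is correct and coincides with the approach the paper attributes to \cite{Henson2014} (and reuses in the proof of the generalized Lemma~\ref{cutfree}): leave the two preparations that touch $Y$ untouched and replace the one preparation $\sigma_{AC}$ not touching $Y$ by two copies with complementary outputs discarded, i.e.\ by the product of its marginals. Your verification of the three marginal identities via uniqueness of the discarding test (no-signalling) is exactly the mechanism the paper invokes; the only quibble is that you do not actually need local discriminability for the factorisation $p'(x,z)=p'(x)p'(z)$, since once the $B$-wires are discarded the whole circuit splits into two disconnected components, and operational composition of independent tests already yields a product distribution.
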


For our purposes we need a generalization of this result. The GBN for the scenario of any parent connecting at most $m$ children can be described as follows. The $n$ random variables $V_1,...,V_n$ arise from $n$ measurement tests. For any $m$ of these measurement tests there is a preparation test whose output systems are input for exactly these measurements. We denote the preparation test corresponding to a subset $I\subset \{1,...,n\},\ |I|=m$ by $\sigma_{I}$. In total there are therefore $n \choose m$ preparations. This GBN can be found in Figure \ref{gnm}, and we denote the corresponding DAG by $\mathcal{G}_{n,m}$.

\begin{widetext}
 \begin{center}
 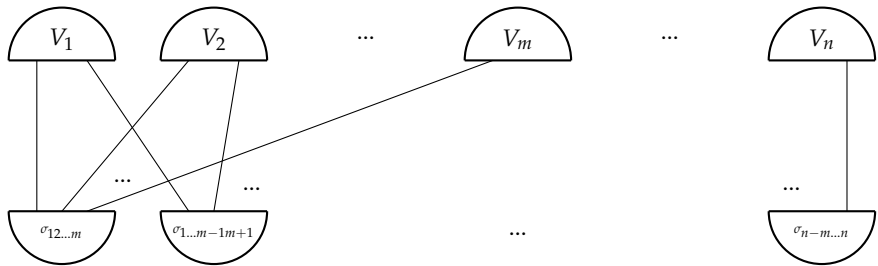
\begin{figure}[!h]
\begin{tikzpicture}
 \prep{-5}{0}{\tiny$\sigma_{12...m}$}
 \prep{-3}{0}{\tiny$\sigma_{1...m-1m+1}$}
 \node at (1,-.3)  {...};
 \prep{5}{0}{\tiny$\sigma_{n-m...n}$}
 \meas{-5}{2}{$V_1$}
  \meas{-3}{2}{$V_2$}
  \node at (-1,2.3) {...};
   \meas{1}{2}{$V_m$}
    \node at (3,2.3) {...};
    \meas{5}{2}{$V_n$}

   \draw (-5.33,0) -- (-5.33,2);
   \draw (-5,0) -- (-3.33,2);
   \node at (-4.2,.4) {...};
   \draw (-4.67,0) -- (0.67,2);
    \draw (-3.33,0) -- (-4.67,2);
   \draw (-3.0,0) -- (-2.67,2);
   \node at (-2.5,.3) {...};
   \node at (4.6,.3) {...};
   \draw (5.33,0) -- (5.33,2);

\end{tikzpicture}

\caption{A generalized Bayesian network for $\mathcal{G}_{n,m}$}\label{gnm}
\end{figure}

\end{center}
\end{widetext}

\begin{figure}
\begin{center}

\pgfmathsetmacro{\t}{1.}
\pgfmathsetmacro{\s}{2.5}
\begin{tikzpicture}

\prep{\t *-3}{0}{\small$\sigma_{12}$}
\prep{\t *-1}{0}{\small$\sigma_{13}$}
\prep{\t *1}{0}{\small$\sigma_{13}'$}
\prep{\t *3}{0}{\small$\sigma'_{23}$}

 \meas{\t *-2}{\s}{$V_2$}
  \meas{\t*0}{\s}{$V_3$}
   \meas{\t *2}{\s}{$V_4$}

  \draw (\t *-3.3,0) -- (\t *-2.3,\s);
  \draw (\t *-2.7,0) -- (\t *-0.3,\s);
 \draw (\t *-1.3,0) -- (\t *-1.7,\s);

  \draw (\t *-0.7,0) -- (\t *-0.6,.3*\s);
  \draw (\t *0.7,0) -- (\t *0.6,.3*\s);
 \draw (\t *1.3,0) -- (\t *1.7,\s);

  \draw (\t *2.7,0) -- (\t *.3,\s);
   \draw (\t *3.3,0) -- (\t *2.3,\s);

     \earth{\t *-.6}{.3*\s}
	\earth{\t *.6}{.3*\s}
\end{tikzpicture}

\end{center}
\caption{An example of the modified GBN from Lemma \ref{cutfree} for $n=3$ and $i=2$}\label{g34}
\end{figure}
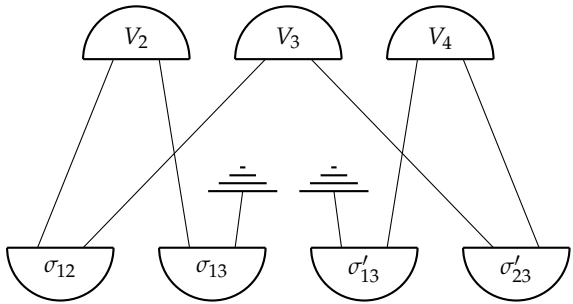

\begin{lem}\label{cutfree}
 For any probability distribution arising from a GBN of shape $\mathcal{G}_{n,m}$ and any index $i\in \{1,...,n\}$ there is a probability distribution $p'$ such that any bivariate marginal involving $V_i$ is equal to the corresponding marginal of $p$ and $p'(v_1,...v_{i-1},v_{i+1},...,v_n)$ is compatible with $\mathcal{G}_{n-1,m-1}$.
\end{lem}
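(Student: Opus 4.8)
The plan is to prove the lemma constructively, by exhibiting an explicit GBN whose output distribution is the desired $p'$; this generalizes the modification depicted in Figure~\ref{g34}. Denote by $\sigma_I$ the preparation test of the given GBN of shape $\mathcal{G}_{n,m}$ attached to the $m$-subset $I\subset\{1,\dots,n\}$, and by $\Phi_k$ the measurement yielding $V_k$; the preparations feeding $V_k$ are exactly the $\sigma_I$ with $k\in I$, and since distinct preparation tests compose in parallel, the global preparation is the parallel composition $\bigotimes_I\sigma_I$. First I would modify the network as follows: leave every $\sigma_I$ with $i\in I$, and every measurement $\Phi_k$, untouched; and for every $I$ with $i\notin I$ replace $\sigma_I$ by the parallel composition $\bigotimes_{k\in I}\sigma_I|_k$ of its single-system marginals, i.e.\ break it into $m$ mutually uncorrelated single-system preparations. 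Here $\sigma_I|_k$ denotes the marginal of $\sigma_I$ obtained by discarding all of its systems except the $k$-th; it is well defined by uniqueness of the discarding test, i.e.\ by non-signalling \cite{Chiribella2010}. Let $p'$ be the distribution produced by the resulting network.

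I would then check the first claim. Fix $j\neq i$. The marginal $p'(v_i,v_j)$ is obtained by applying $\Phi_i$ and $\Phi_j$ and discarding every other party $V_k$; by non-signalling, discarding $V_k$ discards precisely the $k$-subsystems of the preparations, so the residual state on the wires entering $V_i$ and $V_j$ is $\bigotimes_I\sigma_I|_{I\cap\{i,j\}}$. For $I\supseteq\{i,j\}$ and for $I$ with $i\in I$ and $j\notin I$ this factor was left unchanged by the modification; for $I$ with $j\in I$ and $i\notin I$ the factor is $\sigma_I|_{\{j\}}$, which is also unchanged because the replacement preserves all single-system marginals. Hence this residual state, and therefore $p'(v_i,v_j)$, agrees with the one computed from $p$; in particular $p'(v_i)=p(v_i)$.

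For the second claim I would marginalize $V_i$ out of $p'$. By non-signalling this amounts to discarding the $i$-subsystems of the preparations $\sigma_I$ with $i\in I$, which turns each such $\sigma_I$ into the $(m-1)$-party resource $\sigma_I|_{I\setminus\{i\}}$ shared among the parties of $I\setminus\{i\}$. As $I$ ranges over the $m$-subsets containing $i$, the set $I\setminus\{i\}$ ranges over all $(m-1)$-subsets of $\{1,\dots,n\}\setminus\{i\}$, so we obtain exactly one $(m-1)$-party preparation per such subset. The remaining objects are the single-system preparations $\sigma_I|_k$ ($i\notin I$, $k\in I$), each feeding one $V_k$; absorbing them into the corresponding measurement — replacing $\Phi_k$ by the measurement that first prepares these fixed ancillas and then applies $\Phi_k$, which is again a valid measurement test — yields a GBN of shape $\mathcal{G}_{n-1,m-1}$ on the vertex set $\{1,\dots,n\}\setminus\{i\}$. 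Thus $p'(v_1,\dots,v_{i-1},v_{i+1},\dots,v_n)$ is compatible with $\mathcal{G}_{n-1,m-1}$, which finishes the argument.

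The step I expect to be delicate is the bookkeeping in the operational-probabilistic-theory framework: one must invoke, at the right places, that non-signalling is equivalent to existence and uniqueness of the discarding test \cite{Chiribella2010} — this is what legitimizes ``marginalizing a party $=$ discarding its subsystems'' and guarantees that the single-system marginals $\sigma_I|_k$ are well defined and untouched by the replacement — and that prepending fixed ancillary states to a measurement produces another admissible measurement. With those framework facts granted, the combinatorial identification of the split preparations with the resource pattern of $\mathcal{G}_{n-1,m-1}$ is routine.
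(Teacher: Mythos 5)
Your proof is correct and follows essentially the same route as the paper's: both modify the GBN by leaving the preparations $\sigma_I$ with $i\in I$ intact, breaking the preparations with $i\notin I$ into independent marginals via the (unique) discarding test, and then checking that single-system marginals, hence all bivariate marginals involving $V_i$, are preserved while the reduced network has shape $\mathcal{G}_{n-1,m-1}$. The only immaterial difference is granularity — you split each $\sigma_I$ with $i\notin I$ into $m$ single-system marginals, whereas the paper splits it into two pieces (the first system and the remaining $m-1$) — and you write out explicitly the verification that the paper defers to the analogy with Theorem 23 of \cite{Henson2014}.
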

\begin{proof}
 Analoguous to the proof of Lemma \ref{cutfree32} in \cite{Henson2014} we define a new GBN from the old one as follows:
 \begin{itemize}
  \item Any preparation test $\sigma_I$ with $i\in I$ is left as it is.
  \item Any preparation test $\sigma_I$ with $i\notin I$ is copied. In one copy the first outgoing edge is discarded and in the second copy all edges except the first are discarded.
 \end{itemize}
 The modified GBN is depicted in Figure \ref{g34} for $n=3$, $m=2$ and $i=2$.
 It can be seen in an analogous way as in the proof of Theorem 23 in \cite{Henson2014} that using the probability distribution that arises from this GBN as $p'$ has the desired properties.
\end{proof}

We are now ready to prove the inequality \eqref{theineq}.

\begin{thm}
Let $V_1,...,V_n$ be random variables defined by a GBN of shape $\mathcal{G}_{n,2}$. Then for any $i\in\{1,...,n\}$
\begin{equation}
 \sum_{\substack{j\in\{1,...,n\}\\ j\neq i}} I(V_i:V_j)\le H(V_i)
\end{equation}
\end{thm}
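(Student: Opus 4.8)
The plan is to use Lemma~\ref{cutfree} to reduce the statement, in a single step, to the situation in which all the variables other than $V_i$ are mutually independent, and then to close the estimate with the chain rule for mutual information.

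First I would apply Lemma~\ref{cutfree} with $m=2$ to the output distribution $p(v_1,\dots,v_n)$ of the given GBN and to the chosen index $i$. This yields a distribution $p'$ with two properties: every bivariate marginal $p'(v_i,v_j)$ with $j\neq i$ equals $p(v_i,v_j)$ — so in particular $p'(v_i)=p(v_i)$, and the quantities $I(V_i:V_j)$ and $H(V_i)$ are the same whether computed from $p$ or from $p'$ — and the marginal $p'(v_1,\dots,v_{i-1},v_{i+1},\dots,v_n)$ is compatible with $\mathcal{G}_{n-1,1}$. Since $\mathcal{G}_{n-1,1}$ is the disjoint union of $n-1$ preparation–measurement pairs sharing no resource, its classical output distribution is a product; hence under $p'$ the family $\{V_j:j\neq i\}$ is mutually independent.

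Second I would carry out the entropic estimate under $p'$. Fix an ordering of the $n-1$ variables $V_j$, $j\neq i$, and let $V_{<j}$ denote those preceding $V_j$; write $Y$ for the full tuple $(V_j)_{j\neq i}$. The chain rule gives $I(V_i:Y)=\sum_{j\neq i} I(V_i:V_j\mid V_{<j})$. Using the identity $I(V_i:V_j\mid V_{<j})-I(V_i:V_j)=I(V_j:V_{<j}\mid V_i)-I(V_j:V_{<j})$ together with $I(V_j:V_{<j})=0$ (mutual independence under $p'$) and $I(V_j:V_{<j}\mid V_i)\ge 0$, we get $I(V_i:V_j\mid V_{<j})\ge I(V_i:V_j)$, hence $\sum_{j\neq i} I(V_i:V_j)\le I(V_i:Y)$. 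Finally $I(V_i:Y)=H(V_i)-H(V_i\mid Y)\le H(V_i)$, since the conditional Shannon entropy of a classical variable is nonnegative. Transferring back through the marginal identities of Lemma~\ref{cutfree} gives $\sum_{j\neq i} I(V_i:V_j)\le H(V_i)$.

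The only substantive ingredient is Lemma~\ref{cutfree}; once it is available the argument is short and elementary, and — in contrast with the quantum Corollary above — no induction on $n$ is needed, because the cut-free lemma collapses the entire network to the independent case in one move. The step demanding the most care in a fully detailed write-up is the justification that a GBN of shape $\mathcal{G}_{n-1,1}$ has a product output distribution, i.e. that in the operational-probabilistic-theory framework disconnected preparation–measurement pairs genuinely produce independent classical outcomes; but this follows directly from the compositional structure of the theory and the uniqueness of the discarding test established in \cite{Chiribella2010}.
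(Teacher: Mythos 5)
Your proof is correct, and it rests on the same two pillars as the paper's own argument: Lemma~\ref{cutfree} applied with $m=2$ (to make the variables $V_j$, $j\neq i$, mutually independent while preserving every bivariate marginal involving $V_i$), followed by the superadditivity of mutual information over independent variables, which is strong subadditivity in disguise. The difference is organizational rather than substantive. The paper proceeds by induction on $n$: it merges two of the independent variables into a single node $X=(V_2',V_3')$, uses the identity $I(V_1':V_2')+I(V_1':V_3')=I(V_2':V_3')-I(V_2':V_3'\mid V_1')+I(V_1':V_2'V_3')\le I(V_1':V_2'V_3')$ — exactly the two-variable instance of your telescoping bound — checks that the merged configuration is again compatible with $\mathcal{G}_{n-1,2}$, and invokes the induction hypothesis, which implicitly re-applies the cut-free lemma at every level. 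You instead apply the lemma once and close the estimate in a single pass via the chain rule, $\sum_{j\neq i}I(V_i:V_j)\le\sum_{j\neq i}I(V_i:V_j\mid V_{<j})=I(V_i:Y)\le H(V_i)$; the interaction-information identity you invoke is correct, and your verification of its two hypotheses ($I(V_j:V_{<j})=0$ by independence under $p'$, and $I(V_j:V_{<j}\mid V_i)\ge 0$) is sound, as is the transfer back to $p$ through the preserved marginals. This is a modest but genuine streamlining: it needs no induction, avoids having to argue that grouped variables remain compatible with $\mathcal{G}_{n-1,2}$, and makes explicit that one application of the cut-free lemma already suffices. Your concluding caveat — that a GBN of shape $\mathcal{G}_{n-1,1}$ produces a product distribution because disconnected preparation--measurement pairs yield independent outcomes — is precisely the point the paper states in one clause without elaboration, so it is right to flag it as the step deserving care.
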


\begin{proof}
Without loss of generality we take $i=1$.
We proceed by induction over $n$. For $n=2$ the inequality is trivially true. Assume now that the statement is true for $n-1$. We construct the probability distribution $p'$ according to Lemma \ref{cutfree} and observe that $p'(v_2,...,v_n)$ arises from $\mathcal{G}_{1,n-1}$, i.e. it is a product distribution. Denote the modified random variables by $V_1',...,V_n'$ and calculate
\begin{widetext}
\begin{eqnarray}
 \sum_{j=2}^n I(V_1:V_j)&=& \sum_{j=2}^n I(V'_1:V'_j)\nonumber\\
 &=&I(V'_2:V'_3)-I(V'_2:V'_3|V'_1)+I(V'_1:V'_2V'_3)+\sum_{j=4}^n I(V'_1:V'_j)\nonumber\\
 &\le&I(V'_1:V'_2V'_3)+\sum_{j=4}^n I(V'_1:V'_j),
\end{eqnarray}
\end{widetext}
where the inequality follows from the independence of $V_2'$ and $V_3'$ and srong subadditivity.
Now observe that with $X=(V'_2,V'_3)$ the distribution of $V'_1, X, V'_4,...,V'_n$ is compatible with $\mathcal{G}_{n-1,2}$ and therefore we have, using the induction hypothesis,
\begin{equation}
I(V'_1:V'_2V'_3)+\sum_{j=4}^n I(V'_1:V'_j)\le H(V_1').
\end{equation}
But $p'(v_1)=p(v_1)$ and therefore \eqref{theineq} is proven.
\end{proof}

For general $m$ the situation is somewhat less simple, but for the special case of $n=m+1$ we can still prove a nontrivial inequality.

\begin{thm}
 Let $V_1,...,V_{m+1}$ be random variables corresponding to a GBN of shape $\mathcal{G}_{m+1,m}$. Then
 \begin{equation}\label{eq:nequalsmp1}
  \sum_{k=2}^{m+1}I(V_1:V_k)\le \sum_{k=0}^{m-3}\frac{(m-k-1)(m-k-1)!}{(m-1)!}H(V_{k+1}),
 \end{equation}
 and there is a set of random variables $X_1,...,X_{m+1}$ incompatible with $\mathcal{G}_{m+1,m}$ that violates this inequality.
\end{thm}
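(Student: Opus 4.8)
There are two things to prove: that the displayed inequality is satisfied by every GBN of shape $\mathcal{G}_{m+1,m}$, and that there is a distribution outside that class violating it. I would handle them separately.

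\emph{Validity.} The plan is to induct on $m$, peeling off the distinguished variable with one application of Lemma~\ref{cutfree}. Cutting on $i=1$ yields a distribution $p'$ on $V_1,\dots,V_{m+1}$ with $p'(v_1,v_k)=p(v_1,v_k)$ for every $k$ (hence $\sum_{k=2}^{m+1}I(V_1:V_k)$ and $H(V_1)$ are untouched) and with $p'(v_2,\dots,v_{m+1})$ compatible with $\mathcal{G}_{m,m-1}$. The reason this should close is an arithmetic fact about the right-hand side $R_m(V_1;V_2,\dots,V_{m+1})$ of the theorem: since $(m-1-k)(m-1-k)!=(m-k)!-(m-k-1)!$, one has, with the convention $R_2\equiv0$,
\begin{equation*}
R_m(V_1;V_2,\dots)=(m-1)H(V_1)+\tfrac1{m-1}\,R_{m-1}(V_2;V_3,\dots),
\end{equation*}
and also $\sum_{k=1}^{m-3}\tfrac{(m-k-1)(m-k-1)!}{(m-1)!}=1-\tfrac2{(m-1)!}$ by telescoping. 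So the inductive step should amount to establishing, in $p'$, the estimate $\sum_{k=2}^{m+1}I(V_1:V_k)\le(m-1)H(V_1)+\tfrac1{m-1}\sum_{k=3}^{m+1}I(V_2:V_k)$, after which the induction hypothesis for the $\mathcal{G}_{m,m-1}$ marginal — invoked with any of $V_2,\dots,V_{m+1}$ in the distinguished slot, since that hypothesis is permutation symmetric, and averaged, which is where the $\tfrac1{m-1}$ comes from — finishes the argument; the base case $m=3$ is the inequality $\sum_{k\ne j}I(V_j:V_k)\le H(V_j)$ of the preceding theorem. The only entropy tools needed are strong subadditivity and nonnegativity of mutual information (monotonicity is automatic, all $V_k$ being classical).

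\emph{The main obstacle} is exactly the inner estimate of the inductive step. In the $\mathcal{G}_{n,2}$ argument one could coarse-grain $X=(V_2,V_3)$ and note that $(V_1,X,V_4,\dots)$ is again a $\mathcal{G}_{n-1,2}$; but for $m\ge3$ coarse-graining two nodes of $\mathcal{G}_{m+1,m}$ promotes the two sources that ``missed'' those nodes to global common ancestors, so the coarse-grained network is unconstrained and the induction cannot be run that way. One must instead keep all of $V_2,\dots,V_{m+1}$ alive, and the trouble is that Lemma~\ref{cutfree} cutting on $V_1$ alone discards the parents of $V_1$ itself, so it is not clear the inner estimate survives from the $\mathcal{G}_{m,m-1}$-marginal only: it may well require a second (careful) application of the cutting lemma or a direct entropy computation on the full network. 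Getting the constants to come out as $\tfrac{(m-k-1)(m-k-1)!}{(m-1)!}$ rather than something looser is the delicate bookkeeping this step demands, and is where I would expect the real work to lie.

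\emph{Violation.} It suffices to take $X_1=\dots=X_{m+1}$ all equal to a single uniform bit. Then $I(X_1:X_k)=H(X_1)=1$ for each $k$, so the left-hand side is $m$, while by the telescoping identity above the right-hand side is $\bigl((m-1)+1-\tfrac2{(m-1)!}\bigr)=m-\tfrac2{(m-1)!}<m$. Since the inequality is valid for $\mathcal{G}_{m+1,m}$, this $(X_1,\dots,X_{m+1})$ cannot be realized there — as is anyway obvious, a network each of whose sources feeds at most $m$ of the $m+1$ outputs cannot produce perfect $(m+1)$-partite correlation — and in particular the inequality is not a consequence of the elementary (Shannon) inequalities alone, so it is a genuine constraint of the causal structure.
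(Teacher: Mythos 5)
Your setup is the right one---cut at $i=1$ with Lemma~\ref{cutfree}, use that the bivariate marginals involving $V_1$ survive while $p'(v_2,\dots,v_{m+1})$ becomes compatible with $\mathcal{G}_{m,m-1}$, close the induction via the recursion $R_m=(m-1)H(V_1)+\tfrac{1}{m-1}R_{m-1}$---and your violation argument (maximally correlated uniform bits, right-hand side $m-2/(m-1)!$) is exactly the paper's. But the step you flag as ``the main obstacle'' and leave unproven, namely
\begin{equation*}
\sum_{k=2}^{m+1}I(V'_1:V'_k)\;\le\;(m-1)H(V'_1)+\frac{1}{m-1}\sum_{k=3}^{m+1}I(V'_2:V'_k),
\end{equation*}
is where the proof's only real computation lives, so the proposal has a genuine gap there. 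Your worry that this estimate might require a second application of the cutting lemma or a computation on the full network is unfounded: it is a universal Shannon-type inequality, valid for \emph{any} joint distribution of classical variables. For each $k\ge 3$ the interaction-information identity gives $I(V_1:V_2)+I(V_1:V_k)=I(V_2:V_k)-I(V_2:V_k|V_1)+I(V_1:V_2V_k)$; dropping $-I(V_2:V_k|V_1)\le 0$ (strong subadditivity) and using $I(V_1:V_2V_k)\le H(V_1)$ yields $I(V_1:V_2)+I(V_1:V_k)\le I(V_2:V_k)+H(V_1)$. Summing over $k=3,\dots,m+1$, adding $(m-2)\sum_{k\ge3}I(V_1:V_k)\le(m-2)(m-1)H(V_1)$ to both sides, and dividing by $m-1$ gives exactly the displayed estimate. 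The causal structure enters only through the single cut (so that the induction hypothesis may be applied to $p'(v_2,\dots,v_{m+1})$) and through the induction hypothesis itself.

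One smaller correction: the paper does not average the induction hypothesis over the $m-1$ possible distinguished slots; it applies it once, with $V_2$ distinguished. The right-hand side of \eqref{eq:nequalsmp1} is not symmetric under permutations of $V_2,\dots,V_{m+1}$, so the averaging you describe would produce a different, symmetrized bound and is not obviously compatible with the stated constants. The factor $\tfrac{1}{m-1}$ is an artifact of the algebraic identity above, not of averaging.
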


Note that this inequality is, in particular, also true for quantum-classical bayesian networks and, to our knowledge, provides the only known entropic corollaries in this case, too.
\begin{proof}(by induction)
For $m=1$ the statement is trivially true, as then the two random variables are independent and therefore $I(V_1:V_2)=0\le 0$. Assume now the inequality was proven for $m-1$. Construct random variables $V_1',...,V_{m+1}'$ according to Lemma \ref{cutfree}. Then calculate
\begin{widetext}
\begin{eqnarray}
 \sum_{k=2}^{m+1}I(V_1:V_k)&=&\sum_{k=2}^{m+1}I(V'_1:V'_k)\nonumber\\
 &=&\frac{1}{m-1}\sum_{k=3}^{m+1}\left[I(V'_2:V'_k)-I(V'_2:V'_k|V'_1)+I(V'_1:V'_2V'_k)+(m-2)I(V'_1:V'_k)\right]\nonumber\\
 &\le&\frac{1}{m-1}\sum_{k=3}^{m+1}\left[I(V'_2:V'_k)+I(V'_1:V'_2V'_k)+(m-2)I(V'_1:V'_k)\right]\nonumber\\
 &\le&(m-1)H(V'_1)+\frac{1}{m-1}\sum_{k=0}^{m-4}\frac{(m-k-2)(m-k-2)!}{(m-2)!}H(V_{k+2})\nonumber\\
 &=&(m-1)H(V'_1)+\sum_{k=1}^{m-3}\frac{(m-k-1)(m-k-1)!}{(m-1)!}H(V_{k+1})\nonumber\\
 &=&\sum_{k=0}^{m-3}\frac{(m-k-1)(m-k-1)!}{(m-1)!}H(V_{k+1}),
\end{eqnarray}
\end{widetext}
where the first inequality follows from strong subadditivity and the second inequality follows from the induction hypothesis and the trivial bound $I(X:Y)\le H(X)$. This completes the proof of the first assertion, i.e. that the inequality is fulfilled by random variables from a GBN of shape $\mathcal{G}_{m+1,m}$. To see that more general random variables violate this inequality, let $X_i=X, i=1,...,m+1$, where $X$ is an unbiased coin. In other words, the $X_i$ are maximally corellated. Then $H(X_i)=1$ and $I(X_1:X_i)=1$. Therefore we have
\begin{equation}
 \sum_{k=2}^{m+1}I(X_1:X_k)=m,
\end{equation}
but
\begin{widetext}
\begin{eqnarray}
 \sum_{k=0}^{m-3}\frac{(m-k-1)(m-k-1)!}{(m-1)!}H(X_{k+1})&=&\sum_{k=0}^{m-3}\frac{(m-k-1)(m-k-1)!}{(m-1)!}\nonumber\\
 &=&m-\frac{2}{(m-1)!}<m,
\end{eqnarray}
\end{widetext}
hence the inequality \eqref{eq:nequalsmp1} is violated
\end{proof}

Note that this inequality yields nontrivial constraints for the entropies of random variables resulting from a GBN of any shape $\mathcal{G}_{n,m},\ n>m$, as it can be applied to any $m+1$ of the $n$ variables.

\bibliography{QDAGsBib}

\end{document}